\newcommandx{\unsure}[2][1=]{\todo[linecolor=red,backgroundcolor=red!25,bordercolor=red,#1]{#2}}
\newcommandx{\change}[2][1=]{\todo[linecolor=blue,backgroundcolor=orange!25,bordercolor=blue,#1]{#2}}
\newcommandx{\info}[2][1=]{\todo[linecolor=OliveGreen,backgroundcolor=OliveGreen!25,bordercolor=OliveGreen,#1]{#2}}
\newcommand\redout{\bgroup\markoverwith{\textcolor{red}{\rule[0.5ex]{2pt}{0.8pt}}}\ULon}
\newtheorem{theorem}{Theorem}
\newtheorem{lemma}[theorem]{Lemma}
\tikzset{font= }
\newcommand\nc\newcommand
\nc\bfa{{\boldsymbol a}}\nc\bfA{{\boldsymbol A}}\nc\cA{{\mathscr A}}
\nc\bfb{{\boldsymbol b}}\nc\bfB{{\boldsymbol B}}\nc\cB{{\mathscr B}}
\nc\bfc{{\boldsymbol c}}\nc\bfC{{\boldsymbol C}}\nc\cC{{\mathscr C}}
\nc\bfd{{\boldsymbol d}}\nc\bfD{{\boldsymbol D}}\nc\cD{{\mathscr D}}
\nc\bfe{{\boldsymbol e}}\nc\bfE{{\boldsymbol E}}\nc\cE{{\mathscr E}}
\nc\bff{{\boldsymbol f}}\nc\bfF{{\boldsymbol F}}\nc\cF{{\mathscr F}}
\nc\bfg{{\boldsymbol g}}\nc\bfG{{\boldsymbol G}}\nc\cG{{\mathscr G}}
\nc\bfh{{\boldsymbol h}}\nc\bfH{{\boldsymbol H}}\nc\cH{{\mathscr H}}\nc\fH{{\mathfrak H}}
\nc\bfi{{\boldsymbol i}}\nc\bfI{{\boldsymbol I}}\nc\cI{{\mathcal I}}
\nc\bfj{{\boldsymbol j}}\nc\bfJ{{\boldsymbol J}}\nc\cJ{{\mathscr J}}
\nc\bfk{{\boldsymbol k}}\nc\bfK{{\boldsymbol K}}\nc\cK{{\mathscr K}}
\nc\bfl{{\boldsymbol l}}\nc\bfL{{\boldsymbol L}}\nc\cL{{\mathscr L}}
\nc\bfm{{\boldsymbol m}}\nc\bfM{{\boldsymbol M}}\nc\cM{{\mathscr M}}
\nc\bfn{{\boldsymbol n}}\nc\bfN{{\boldsymbol N}}\nc\sN{{\mathscr N}}
\nc\bfo{{\boldsymbol o}}\nc\bfO{{\boldsymbol O}}\nc\cO{{\mathscr O}}
\nc\bfp{{\boldsymbol p}}\nc\bfP{{\boldsymbol P}}\nc\cP{{\mathscr P}}
\nc\bfq{{\boldsymbol q}}\nc\bfQ{{\boldsymbol Q}}\nc\cQ{{\mathscr Q}}
\nc\bfr{{\boldsymbol r}}\nc\bfR{{\boldsymbol R}}\nc\cR{{\mathscr R}}
\nc\bfs{{\boldsymbol s}}\nc\bfS{{\boldsymbol S}}\nc\cS{{\mathscr S}}
\nc\bft{{\boldsymbol t}}\nc\bfT{{\boldsymbol T}}\nc\cT{{\mathscr T}}
\nc\bfu{{\boldsymbol u}}\nc\bfU{{\boldsymbol U}}\nc\cU{{\mathscr U}}
\nc\bfv{{\boldsymbol v}}\nc\bfV{{\boldsymbol V}}\nc\cV{{\mathscr V}}
\nc\bfw{{\boldsymbol w}}\nc\bfW{{\boldsymbol W}}\nc\cW{{\mathscr W}}
\nc\bfx{{\boldsymbol x}}\nc\bfX{{\boldsymbol X}}\nc\cX{{\mathscr X}}
\nc\bfy{{\boldsymbol y}}\nc\bfY{{\boldsymbol Y}}\nc\cY{{\mathscr Y}}
\nc\bfz{{\boldsymbol z}}\nc\bfZ{{\boldsymbol Z}}\nc\cZ{{\mathscr Z}}
\nc\pp{\mathbb{P}}
\nc\ee{\mathbb{E} }
\renewcommand{\le}{\leqslant}
\renewcommand{\leq}{\leqslant}
\renewcommand{\ge}{\geqslant}
\renewcommand{\geq}{\geqslant}
\nc{\Cay}{{\sf Cay}}
\nc{\ff}{{\mathbb F}}
\newcommand\remove[1]{}
\title[]{Sequence Reconstruction over the Deletion Channel}
\author[]{Fengxing Zhu}\thanks{Institute for Systems Research and Department of ECE, University of Maryland, College Park, MD 20742, USA, fengxing@terpmail.umd.edu. Supported in part by NSF grant CCF 2330909.}
\date{}
\begin{document}
\begin{abstract}
In this paper, we consider the Levenshtein's sequence reconstruction problem in the case where the transmitted codeword is chosen from $\{0,1\}^n$ and the channel can delete up to $t$ symbols from the transmitted codeword. We determine the minimum number of channel outputs (assuming that they are distinct) required to reconstruct a list of size $\ell-1$ of candidate sequences, one of which corresponds to the original transmitted sequence. More specifically,  we determine the maximum possible size of the intersection of 
$\ell \geq 3$ deletion balls of radius $t$ centered at $x_1, x_2, \dots, x_{\ell}$, where 
$x_i \in \{0,1\}^n$ for all $i \in \{1,2,\dots,\ell\}$ and $x_i \neq x_j$ for 
$i \neq j$, with $n \geq t+ \ell-1$ and $t \geq 1$.
\end{abstract}
\maketitle

\section{Introduction}
The study of sequence reconstruction was initiated by Levenshtein in 
\cite{Levenshtein,LEVENSHTEIN2001310}, where a sender transmits a codeword $x$ over multiple noisy channels. The receiver observes the outputs of these channels and attempts to uniquely reconstruct the transmitted codeword $x$. For the fixed codebook $\mathcal{C}$ and the channel the main task is to determine the minimum number of channel outputs required to guarantee unique reconstruction. The motivation for the sequence reconstruction problem originates from biology and chemistry, where traditional redundancy-based error correction methods are unsuitable. In recent years, the problem has regained attention due to its strong relevance to information retrieval in advanced storage technologies. In such systems, the stored data may consist of a single copy that is read multiple times or several redundant copies of the same information~\cite{Horovitz,Yaakoni2}. This problem~\cite{Horovitz} is particularly significant in the context of DNA data storage~\cite{Yazdi,Bornholt}, where numerous noisy copies of DNA strands are available, and the objective is to accurately reconstruct the original information from these imperfect observations.

 This sequence reconstruction problem has been extensively studied under various channels. Hirschberg and Regnier in \cite{Hirschberg} derived tight bounds on the number of string subsequences. In \cite{Levenshtein,LEVENSHTEIN2001310}, Levenshtein obtained the minimum number of channel outputs for the deletion and insertion channel required for unique reconstruction for the case where $\mathcal{C}$ consists of all binary vectors of length $n$. Gabrys and Yaakobi~\cite{Gabrys} later solved the sequence reconstruction problem over the $t$-deletion channel, where $\mathcal{C}$ consists of binary vectors such that $d_L(x,y) \geq 2$ for $x,y \in \mathcal{C}$, with $d_L(x,y)$ being the Levenshtein distance between $x$ and $y$. More recently, Pham, Goyal, and Kiah~\cite{PHAM2025105980} obtained a complete asymptotic solution for this problem where $\mathcal{C}$ consists of binary vectors such that $d_L(x,y) \geq \ell$ for $x,y \in \mathcal{C}$.

The sequence reconstruction problem has also been studied under other channel models 
besides the deletion channel. Sala, Gabrys, Schoeny, and Dolecek~\cite{Sala} solved 
this problem for the insertion channel, assuming the codebook $\mathcal{C}$ consists 
of $q$-ary vectors satisfying $d_L(x,y) \geq \ell$ for all distinct $x, y \in \mathcal{C}$. 
Abu-Sini and Yaakobi~\cite{Abu-Sini} investigated the reconstruction problem 
for channels involving a single deletion combined with multiple substitutions, as well 
as channels involving a single insertion combined with a single substitution.

A variant of the sequence reconstruction problem allows the decoder to output a list 
of possible sequences instead of a unique reconstruction. Yaakobi and 
Bruck~\cite{Yaakobi} studied this problem for channels introducing substitution errors. 
In particular, they investigated the maximum intersection of $m$ Hamming balls of 
radius $t$ centered at $x_1, x_2, \dots, x_m$, where $d_H(x_i, x_j) \geq d$ for 
$i \neq j$. Junnila, Laihonen, and Lehtil\"a~\cite{Junnila,Junnila2} analyzed the 
list size when the channel introduces substitution errors with $t = e + \ell$, where 
$e$ is the error-correcting capability of a binary code $\mathcal{C}$. More recently, 
they extended these results from the binary case to the $q$-ary case 
in~\cite{JUNNILA2025115279}.

In this paper we focus on the deletion channel. Formally, when a a codeword of length $n$ is sent through a $t$-deletion channel, a subsequence of length $n-t$ is received. A $t$-deletion correcting code $\mathcal{C}$ is a subset of length-$n$ binary vectors such that for any vector $x \in \mathcal{C}$, $x$ can be uniquely identified from any length-$(n-t)$ subsequence of $x$. More specifically, we study the minimum number of $t$-deletion channel outputs (assuming they are distinct) required to reconstruct a list of size $\ell-1$ of candidate sequences, one of which corresponds to the original transmitted sequence when $\mathcal{C}=\{0,1\}^n$. In other words, we determine the maximum possible size of the intersection of 
$\ell \geq 3$ deletion balls of radius $t$ centered at $x_1, x_2, \dots, x_{\ell}$, where 
$x_i \in \{0,1\}^n$ for all $i \in \{1,2,\dots,\ell\}$ and $x_i \neq x_j$ for 
$i \neq j$, with $n \geq t+ \ell-1$ and $t \geq 1$.

\section{Definitions and Preliminaries}
Let $x$ be a binary sequence of length $n$ over $\mathbb{F}_2^n$. The deletion ball of radius $t$ centered at $x \in \mathbb{F}_2^n$ is define to be 
$$ \mathcal{D}_t(x)=\{y \in \mathbb{F}_2^{n-t} | y \textrm{ is a subsequence of } x \}.$$

For any two sequences $x_1$ and $x_2$, their Levenshtein distance is $t$ if $\mathcal{D}_t(x_1) \cap \mathcal{D}_t(x_2) \neq \emptyset$ and $\mathcal{D}_{t-1}(x_1) \cap \mathcal{D}_{t-1}(x_2) = \emptyset$

Let $a_n \in \mathbb{F}_2^n$ be an alternating sequences where its first bit is 1. For $0 < t < n$, we denote the maximum size of a deletion ball of radius $t$, by $D(n,t)$,i.e 
$$D(n,t)=\max_{x \in \mathbb{F}_2^n}|\mathcal{D}_t(x)|.$$
From \cite{Levenshtein} and \cite{LEVENSHTEIN2001310}, we know that 
$$D(n,t)=| \mathcal{D}_t(a_n)|=\sum_{i=0}^t \binom{n-t}{i}$$
and also 
$$D(n,t)=D(n-1,t)+D(n-2,t-1).$$
Note that $D(n,n)=1$ and $D(n,t)=0$ if $t <0$ or $n <t$. 

Due to Levenshtein in \cite{Levenshtein} and \cite{LEVENSHTEIN2001310}, we have 
$$\max_{x_1 \neq x_2 ,x_1,x_2 \in \mathbb{F}_2^n}|\mathcal{D}_t(x_1) \cap \mathcal{D}_t(x_2)|=2D(n-2,t-1).$$
In this paper we will study 
$$N(n,\ell,t):=\max_{x_1 \neq x_2 ...\neq x_{\ell} ; x_1,x_2,...,x_{\ell} \in \mathbb{F}_2^n}|\mathcal{D}_t(x_1) \cap \mathcal{D}_t(x_2) ...\cap \mathcal{D}_t(x_{\ell})|,$$
where $\ell \geq 3$, $n \geq t+ \ell-1$ and $t \geq 1$. Specifically, we establish the following theorem. 

\begin{theorem} \label{theorem: main}
For $\ell \geq 3$, $n \geq t+ \ell-1$ and $t \geq 1$, we have that 
\begin{align*}
    N(n,\ell,t)= \sum_{i=1}^{\ell-2}D(n-2i,t-i)+2D(n-2(\ell-1),t-(\ell-1)).
\end{align*}
\end{theorem}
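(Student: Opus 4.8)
The plan is to prove the two inequalities $N(n,\ell,t)\le \mathrm{RHS}$ and $N(n,\ell,t)\ge \mathrm{RHS}$ separately, organizing both around two recursions that the claimed value
$$F(n,\ell,t):=\sum_{i=1}^{\ell-2}D(n-2i,t-i)+2D\big(n-2(\ell-1),t-(\ell-1)\big)$$
satisfies. A direct computation using $D(m,s)=D(m-1,s)+D(m-2,s-1)$ shows that $F$ obeys the \emph{same} two–term recursion as $D$ in its first two arguments,
$$F(n,\ell,t)=F(n-1,\ell,t)+F(n-2,\ell,t-1),$$
and also a \emph{peeling} recursion that lowers the number of centres,
$$F(n,\ell,t)=D(n-2,t-1)+F(n-2,\ell-1,t-1),$$
with base value $F(n,2,t)=2D(n-2,t-1)$, which is exactly Levenshtein's two–centre bound and hence serves as my base case. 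The first recursion drives the upper bound (induction on $n+t$ with $\ell$ fixed), while the second records how an extremal family should lose one centre under $(n,t)\mapsto(n-2,t-1)$ and will guide the construction.

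The engine for both directions is a first–symbol decomposition of the common–subsequence set. Writing $I=\bigcap_i\mathcal D_t(x_i)$ and $I_c=\{y\in I: y_1=c\}$, we have $I=I_0\sqcup I_1$, and for each centre $x_i$ with first symbol $b_i$ and leading run length $r_i$ the greedy (earliest) embedding gives $\mathcal D_t(x_i)=b_i\cdot\mathcal D_t\big((x_i)_{[2,n]}\big)\ \sqcup\ \overline{b_i}\cdot\mathcal D_{t-r_i}\big((x_i)_{[r_i+2,n]}\big)$. Consequently $|I_c|$ is the size of the intersection of the reduced balls $\mathcal D_t\big((x_i)_{[2,n]}\big)$ (over $x_i$ with $b_i=c$) and $\mathcal D_{t-r_i}\big((x_i)_{[r_i+2,n]}\big)$ (over $x_i$ with $b_i=\overline c$), all of which output strings of the common length $n-t-1$. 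As a sanity check, for the complementary alternating pair $x_1=a_n$, $x_2=\overline{a_n}$, the two reduced balls contributing to $I_1$ are $\mathcal D_t(\overline{a_{n-1}})$ and $\mathcal D_{t-1}(\overline{a_{n-2}})$, and since $\overline{a_{n-2}}$ is a subsequence of $\overline{a_{n-1}}$ one is contained in the other, so $I_1=\mathcal D_{t-1}(\overline{a_{n-2}})$ is a full ball of size $D(n-2,t-1)$; symmetrically $|I_0|=D(n-2,t-1)$, recovering $2D(n-2,t-1)$. This containment phenomenon is exactly the origin of the $D(n-2,t-1)$ summand.

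For the upper bound I would induct on $n+t$ (with $\ell$ fixed) via the first recursion: given extremal distinct $x_1,\dots,x_\ell$, bound $|I_0|$ and $|I_1|$ by the $N$–values of the reduced configurations and add. The main obstacle is that the reduced balls in a branch need not share a radius: a centre with leading run $r_i\ge 2$ contributes radius $t-r_i$ rather than $t-1$, so a branch is in general an intersection of deletion balls with \emph{heterogeneous} parameters, which is not literally an $N$–quantity. I expect to handle this by a normalization step showing that, without decreasing $|I|$, one may assume every centre begins with a run of length one (a run–shortening argument that must be checked to preserve distinctness and not collapse two centres), after which both branches are honest intersections of $\ell$ deletion balls at parameters $(n-1,t)$ and $(n-2,t-1)$, and the inductive hypothesis yields $|I|\le N(n-1,\ell,t)+N(n-2,\ell,t-1)=F(n,\ell,t)$. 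The second delicate point is the degenerate subcase in which the truncated centres in a branch are no longer distinct, or one reduced ball contains another: there the effective number of centres drops from $\ell$ to $\ell-1$ and the branch must instead be bounded by $D(n-2,t-1)$, which is precisely the content of the peeling recursion and the place where distinctness does the real work; showing this is the worst case is the crux.

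For the lower bound I would exhibit an explicit family attaining $F$. Guided by the peeling recursion, I take an extremal $(\ell-1)$–centre family $w_1,\dots,w_{\ell-1}\in\mathbb F_2^{\,n-2}$ for parameters $(n-2,t-1)$ and prepend a fixed length–two block, together with one additional alternating centre, arranged so that in the first–symbol decomposition one branch is forced (by the containment phenomenon above) to be a full ball of size $D(n-2,t-1)$, while the other branch reproduces the inherited family and contributes $N(n-2,\ell-1,t-1)$. The base of the recursion is the complementary alternating pair $\{a_n,\overline{a_n}\}$, and the general family is built by iterating this prepending step; the count is then a routine verification using the same decomposition and the recursion for $D$. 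I expect this direction to be the easier one, the only care being to confirm that the two branches are disjoint, that they contribute exactly $D(n-2,t-1)$ and $N(n-2,\ell-1,t-1)$, and that all $\ell$ centres are distinct.
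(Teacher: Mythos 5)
Your lower bound plan is sound and is essentially the paper's own construction: the family $x_1=10a_{n-2}$, $x_2=01a_{n-2}$, $x_j=(01)^{j-1}a_{n-2(j-1)}$ is exactly what you get by iterating your prepending step, the $0$-branch collapses to a full ball of size $D(n-2,t-1)$ by the containment phenomenon you describe, and the $1$-branch contributes $N_{\ell-1}(n-2,t-1)$; this reproduces the paper's lower-bound proof. The upper bound, however, has a genuine gap, and it sits exactly at what you yourself call ``the crux.'' Your plan reduces everything to the homogeneous situation in which both branches are intersections of $\ell$ reduced balls at a single parameter, via the normalization ``WLOG every centre begins with a run of length one.'' That normalization is (i) unproven --- you give no argument that shortening a leading run never decreases the intersection, and no such monotonicity is obvious --- and (ii) insufficient even if true: whenever the centres have \emph{different} first bits, each branch mixes reduced balls at parameters $(n-1,t)$ (from centres whose first bit matches the branch symbol) with balls at $(n-2,t-1)$ (from the others), regardless of run lengths. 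So the branch is never ``an honest intersection of $\ell$ deletion balls'' at one parameter, the inductive hypothesis with $\ell$ fixed does not apply, and the sum $N(n-1,\ell,t)+N(n-2,\ell,t-1)$ is not a bound you can extract. Your description of the degenerate subcase (truncated centres colliding, or one reduced ball containing another) misidentifies the obstruction; for same-first-bit centres truncation preserves distinctness, and the real problem is the mixed-first-bit configuration.

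That mixed case is where the theorem's content actually lies, and it forces both different bounds and a different induction. For instance, with $\ell=3$ and $x_1=10\cdots$, $x_2=01\cdots$, $x_3=01\cdots$ (the extremal pattern), one bounds $|I_1|$ by discarding $x_1$'s ball entirely and using only the two $(n-2,t-1)$-balls of $x_2,x_3$, getting $2D(n-4,t-2)$ from Levenshtein's two-centre result, and bounds $|I_0|$ by the single ball of $x_1$, getting $D(n-2,t-1)$; these add to exactly $N_3(n,t)$. For general $\ell$ one needs branch bounds of the form $|I_c|\le N_{\ell_c}(\cdot,\cdot)$ where $\ell_c<\ell$ counts only the centres whose parameter survives in branch $c$ --- so the induction must run on $\ell$ as well as on $n+t$ --- together with arithmetic comparison inequalities such as $N_{\ell_1}(n-2,t-1)+N_{\ell_2}(n-2,t-1)\le N_\ell(n,t)$ for $\ell_1+\ell_2=\ell$, $D(x-1,y-1)\le N_m(x,y)$, and $N_m(u-1,v-1)\le N_L(u,v)$ for $L\ge m$. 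This case analysis over the first two bits of the centres, plus those inequalities, constitutes the bulk of the paper's argument (both for $\ell=3$ and for $\ell\ge 4$) and is absent from your proposal; without it the upper bound does not go through.
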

We will adopt the techniques and analysis from \cite{Gabrys} and \cite{PHAM2025105980} to prove Theorem~\ref{theorem: main}. In particular, in order to prove the upper bound in Theorem~\ref{theorem: main}, we will use induction similar to that in \cite{Gabrys} and \cite{PHAM2025105980}. Specifically, we will prove it by induction on $n,\ell,t$. The case where $\ell=3$, $t \geq 1$ and $n \geq t+\ell-1$ will serve as part of the base case. We could use the case $\ell=2$ as a base case but in the proof of the general case there are certain places where $ \ell \geq 4$ is necessary and also we feel it is instructive to give the proof for $\ell=3$. 

Before proceeding, we need to give some definitions and state some lemmas that will be used very often in our analysis.

Let $\chi \subset \mathbb{F}_2^n$ be a set and $v$ a sequence of length at most $n$. We denote by $\chi^{v}$ the set of all sequences in $\chi$ that start with the sequence $v$, that is, 
$$\chi^{v}=\{x \in \chi | v \textrm{ is  a prefix of }x\}.$$

For a sequence $v \in \mathbb{F}_2^m$ and a set $\chi \in \mathbb{F}_2^n$, the set $v \circ \chi$ is prepending the sequence $v$ before every sequence in $\chi$, that is 
$$v \circ \chi=\{(vx) | x \in \chi \}.$$
The following two lemmas will be used very often in our analysis. Lemma \ref{lemma 1} was derived in \cite{Gabrys} and Lemma \ref{lemma 2} was obtained in \cite{Liron}. 

\begin{lemma} \label{lemma 1}
Let $n$, $m_1$, and $t$ be positive integers, and $x=x^1x^2\cdots x^n \in \mathbb{F}_2^n$, $x_1 \in \mathbb{F}_2^{m_1}.$ Assume that $k$ is the smallest integer such that $x_1$ is a subsequence of $(x^1,x^2,...,x^{k})$. Then 
$$\mathcal{D}_t(x)^{x_1}=x_1 \circ \mathcal{D}_{t^*}(x^{k+1},...,x^n),$$
where $t^*=t-(k-m_1).$ In particular,
$$|\mathcal{D}_t(x)^{x_1}|=|\mathcal{D}_{t^*}(x^{k+1},...,x^n)|.$$
\end{lemma}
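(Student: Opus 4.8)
The plan is to prove the claimed identity of sets $\mathcal{D}_t(x)^{x_1} = x_1 \circ \mathcal{D}_{t^*}(x^{k+1},\dots,x^n)$ by two inclusions, working throughout with the notion of an \emph{embedding} of a subsequence: a subsequence $y$ of $x$ corresponds to an increasing tuple of indices $1 \le i_1 < i_2 < \cdots \le n$ at which $x$ reads off $y$. The only structural input I would need is the meaning of the index $k$: since $k$ is the \emph{smallest} integer for which $x_1$ is a subsequence of $(x^1,\dots,x^k)$, every embedding of $x_1$ into $x$ must have its last matched index at some position $p \ge k$ (otherwise $x_1$ would embed into $(x^1,\dots,x^p)$ with $p < k$, contradicting minimality). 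This single observation drives the forward inclusion, while the reverse inclusion is an easy concatenation.

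For the inclusion $\subseteq$, I would take $y \in \mathcal{D}_t(x)^{x_1}$ and write $y = x_1 z$ with $z$ of length $(n-t)-m_1$. Fixing any embedding of $y$ into $x$, its prefix $x_1$ is matched using indices ending at some $p \ge k$, and the remaining symbols $z$ are matched using indices in $\{p+1,\dots,n\} \subseteq \{k+1,\dots,n\}$. Hence $z$ is a subsequence of $(x^{k+1},\dots,x^n)$. Since this tail has length $n-k$ and $z$ has length $(n-t)-m_1$, the number of deleted symbols is forced by a length count to be $(n-k)-\big((n-t)-m_1\big) = t-(k-m_1) = t^*$, so $z \in \mathcal{D}_{t^*}(x^{k+1},\dots,x^n)$ and therefore $y \in x_1 \circ \mathcal{D}_{t^*}(x^{k+1},\dots,x^n)$. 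For the reverse inclusion $\supseteq$, I start from any $z \in \mathcal{D}_{t^*}(x^{k+1},\dots,x^n)$; because $x_1$ embeds into $(x^1,\dots,x^k)$ by the definition of $k$ and $z$ embeds into the disjoint tail $(x^{k+1},\dots,x^n)$, their concatenation $x_1 z$ embeds into $x$, has $x_1$ as a prefix, and has length $m_1 + (n-k-t^*) = n-t$, so $x_1 z \in \mathcal{D}_t(x)^{x_1}$.

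Finally, the map $z \mapsto x_1 z$ is a bijection between $\mathcal{D}_{t^*}(x^{k+1},\dots,x^n)$ and $\mathcal{D}_t(x)^{x_1}$, which upgrades the set equality to the cardinality statement immediately. I expect the only genuinely delicate point to be the forward inclusion, specifically justifying that \emph{every} embedding of the prefix $x_1$ must reach at least position $k$, so that whatever positions are used for $z$ necessarily lie in the tail $(x^{k+1},\dots,x^n)$; everything else is routine bookkeeping of sequence lengths to identify the deletion radius $t^*$.
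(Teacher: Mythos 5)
Your proof is correct and complete. There is, however, no in-paper proof to compare against: the paper states Lemma~\ref{lemma 1} without proof, citing Gabrys and Yaakobi~\cite{Gabrys} as its source, and your two-inclusion embedding argument --- with minimality of $k$ forcing every embedding of the prefix $x_1$ to end at a position $p \geq k$, and the length count identifying the residual radius $t^* = t-(k-m_1)$ --- is precisely the standard argument for this fact (the degenerate case $t^* < 0$ is also covered, since your forward argument shows the left-hand side is then empty, matching the convention that deletion balls of negative radius are empty).
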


 \begin{lemma} \label{lemma 2}
 Let $l <n$, $x \in \mathbb{F}_2^n$, $y \in \mathbb{F}_2^{n-l}$, where $y \in \mathcal{D}_{l}(x)$ and $l <t$. Then $\mathcal{D}_{t-l}(y) \subset \mathcal{D}_{t}(x)$.
\end{lemma}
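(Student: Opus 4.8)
The plan is to prove the containment by a direct transitivity argument for the subsequence relation, after first checking that the two deletion balls live in the same ambient space. First I would verify the lengths: by definition $\mathcal{D}_{t-l}(y)$ consists of subsequences of $y$ of length $(n-l)-(t-l)=n-t$, while $\mathcal{D}_t(x)$ consists of subsequences of $x$ of length $n-t$. Hence both sets are subsets of $\mathbb{F}_2^{n-t}$, so the inclusion is at least type-correct; note also that the hypothesis $l<t$ guarantees $t-l\geq 1$, so $\mathcal{D}_{t-l}(y)$ is a genuine deletion ball of positive radius and the statement is not vacuous.

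Next I would fix an arbitrary $z\in\mathcal{D}_{t-l}(y)$ and show $z\in\mathcal{D}_t(x)$. By definition $z$ is a subsequence of $y$, i.e.\ there are indices $1\le j_1<\cdots<j_{n-t}\le n-l$ with $z_m=y_{j_m}$ for each $m$. Since $y\in\mathcal{D}_l(x)$, the sequence $y$ is itself a subsequence of $x$, so there are indices $1\le i_1<\cdots<i_{n-l}\le n$ with $y_r=x_{i_r}$ for each $r$. Composing these two embeddings, the indices $i_{j_1}<\cdots<i_{j_{n-t}}$ are still strictly increasing and lie in $\{1,\dots,n\}$, and they satisfy $z_m=y_{j_m}=x_{i_{j_m}}$. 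This exhibits $z$ as a subsequence of $x$ of length $n-t$, so $z\in\mathcal{D}_t(x)$. As $z$ was arbitrary, the containment $\mathcal{D}_{t-l}(y)\subset\mathcal{D}_t(x)$ follows.

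Conceptually, the point is simply that deletions compose: obtaining $y$ from $x$ costs $l$ deletions and obtaining $z$ from $y$ costs a further $t-l$ deletions, so $z$ is reachable from $x$ by $l+(t-l)=t$ deletions. There is no genuine obstacle in this argument; the only things requiring care are the bookkeeping of lengths (so that the claimed inclusion is between subsets of the common space $\mathbb{F}_2^{n-t}$) and the observation that a composition of strictly increasing index maps is again strictly increasing, which is precisely what makes the subsequence relation transitive.
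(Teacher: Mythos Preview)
Your proof is correct: the transitivity of the subsequence relation (composing the two strictly increasing index maps) together with the length check is exactly what is needed, and there are no gaps.

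Note, however, that the paper does not actually prove this lemma; it merely states it and attributes it to \cite{Liron}. So there is no ``paper's own proof'' to compare against. Your direct argument is the natural self-contained justification and would serve well as an included proof.
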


For the ease of notation, let us define
$$N_{\ell}(n,t):=\sum_{i=1}^{\ell-2}D(n-2i,t-i)+2D(n-2(\ell-1),t-(\ell-1)).$$

\section{The Intersection of Three Deletion Balls}
In this section we will prove that 
$$N(n,3,t)=N_3(n,t).$$ 

\subsection{The Lower Bound}

In this section we will show that $3D(n-4,t-2)+D(n-3,t-1)$ is a lower bound for $N(n,3,t)$ by showing that sequences $x_1=10a_{n-2}$, $x_2=01a_{n-2}$ and $x_3=0101a_{n-4}$ satisfy 
$$|\mathcal{D}_t(x_1) \cap \mathcal{D}_t(x_2) \cap \mathcal{D}_t(x_3)|=3D(n-4,t-2)+D(n-3,t-1).$$
We have the following Theorem.
\begin{theorem}
For $t \geq 1$ and $n \geq t+2$,
$$N(n,3,t) \geq 3D(n-4,t-2)+D(n-3,t-1).$$
\end{theorem}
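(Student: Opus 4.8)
The plan is to prove the stronger statement that the three pairwise distinct sequences $x_1=10a_{n-2}$, $x_2=01a_{n-2}$, $x_3=0101a_{n-4}$ satisfy $|\mathcal{D}_t(x_1)\cap\mathcal{D}_t(x_2)\cap\mathcal{D}_t(x_3)|=N_3(n,t)$; since the $x_i$ are distinct ($x_1$ begins with $1$, while $x_2,x_3$ begin with $0$ and differ in the fourth bit), this immediately gives $N(n,3,t)\ge N_3(n,t)$. Writing $S:=\mathcal{D}_t(x_1)\cap\mathcal{D}_t(x_2)\cap\mathcal{D}_t(x_3)$, every element of $S$ has length $n-t\ge 2$ and hence begins with $0$ or $1$, so $S=S^0\sqcup S^1$, where $S^b$ collects the common subsequences whose first bit is $b$. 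I will compute $|S^0|$ and $|S^1|$ separately, using Lemma~\ref{lemma 1} to strip a leading bit (tracking where that bit first occurs in each $x_i$ so as to get the correct shifted radius $t^*$) and Lemma~\ref{lemma 2} to discard the redundant larger-radius balls.

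For $S^0$, applying Lemma~\ref{lemma 1} with the prefix $0$ converts the three balls into $\mathcal{D}_{t-1}(a_{n-2})$, $\mathcal{D}_{t}(1a_{n-2})$, and $\mathcal{D}_{t}(101a_{n-4})$ on the respective suffixes: reaching a $0$ in $x_1=10a_{n-2}$ costs one deletion, whereas $x_2,x_3$ already start with $0$. Since $a_{n-2}$ is obtained from both $1a_{n-2}$ and $101a_{n-4}$ by a single deletion, Lemma~\ref{lemma 2} shows $\mathcal{D}_{t-1}(a_{n-2})$ sits inside the other two balls, so the intersection collapses to $\mathcal{D}_{t-1}(a_{n-2})$ and $|S^0|=D(n-2,t-1)$.

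For $S^1$, Lemma~\ref{lemma 1} with the prefix $1$ yields $\mathcal{D}_{t}(0a_{n-2})$, $\mathcal{D}_{t-1}(a_{n-2})$, and $\mathcal{D}_{t-1}(01a_{n-4})$. The first ball again contains $\mathcal{D}_{t-1}(a_{n-2})$ by Lemma~\ref{lemma 2} (delete the leading $0$ of $0a_{n-2}$), reducing matters to the genuine two-ball intersection $|\mathcal{D}_{t-1}(a_{n-2})\cap\mathcal{D}_{t-1}(01a_{n-4})|$, with both centers alternating-like of length $n-2$. I will evaluate this by the same device, splitting on the next bit and applying Lemma~\ref{lemma 1} and Lemma~\ref{lemma 2} once more: each of the two sub-parts collapses to $\mathcal{D}_{t-2}(a_{n-4})$, and the two centers are engineered precisely to realize Levenshtein's two-ball maximum $2D(n-4,t-2)$, so $|S^1|=2D(n-4,t-2)$.

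Finally I combine the counts: $|S|=|S^0|+|S^1|=D(n-2,t-1)+2D(n-4,t-2)$, and the recurrence $D(n-2,t-1)=D(n-3,t-1)+D(n-4,t-2)$ rewrites this as $3D(n-4,t-2)+D(n-3,t-1)=N_3(n,t)$, as claimed. The main obstacle is the $S^1$ step, since it is the single place where a true two-ball intersection must be resolved rather than one redundant ball simply absorbed. The remaining delicate points are purely bookkeeping: getting the shifted radii in Lemma~\ref{lemma 1} right, and handling the small-parameter boundaries (e.g. $t=1$, where $D(n-4,t-2)=0$ and the Lemma~\ref{lemma 2} collapses degenerate to the trivial equalities $\mathcal{D}_0(y)=\{y\}$, and the extreme case $n=t+2$), so that the conventions $D(n,t)=0$ for $t<0$ are applied consistently.
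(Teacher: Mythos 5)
Your proposal is correct and takes essentially the same approach as the paper: the paper uses the same witness sequences $x_1=10a_{n-2}$, $x_2=01a_{n-2}$, $x_3=0101a_{n-4}$ and the same Lemma~\ref{lemma 1}/Lemma~\ref{lemma 2} prefix-stripping and ball-collapsing technique, only splitting directly into the four two-bit prefix classes $\chi^{00},\chi^{01},\chi^{10},\chi^{11}$ (each collapsing to a single ball) instead of your nested first-bit-then-second-bit split. Your counts $|S^0|=D(n-2,t-1)$ and $|S^1|=2D(n-4,t-2)$ agree exactly with the paper's $|\chi^{00}|+|\chi^{01}|$ and $|\chi^{10}|+|\chi^{11}|$, and the final assembly via $D(n-2,t-1)=D(n-3,t-1)+D(n-4,t-2)$ is the same.
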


\begin{proof} 
Let $\chi=\mathcal{D}_t(x_1) \cap \mathcal{D}_t(x_2) \cap \mathcal{D}_t(x_3)$. If $a=a^1a^2\cdots a^j$ is binary sequence of length $j$, then we denote $\bar a=\bar a^1\bar a^2 \cdots \bar a^j$ as a sequence of length $j$ such that $\bar a^i= (1-a^i)$ for $i \in \{1,2\cdots,j\}$. By Lemma~\ref{lemma 1} and Lemma~\ref{lemma 2}, we have 
\begin{align*}
|\chi^{00}|& =|00 \circ \mathcal{D}_{t-2}(a_{n-4}) \cap \mathcal{D}_{t-2}(a_{n-4}) \cap \mathcal{D}_{t-1}(1a_{n-4})|\\
&=|\mathcal{D}_{t-2}(a_{n-4})|\\
&=D(n-4,t-2).
\end{align*}
\begin{align*} 
|\chi^{01}|& =|01 \circ \mathcal{D}_{t-1}(\bar a_{n-3}) \cap \mathcal{D}_{t}(a_{n-2}) \cap \mathcal{D}_{t}(01a_{n-4})|\\
&=|\mathcal{D}_{t-1}(\bar a_{n-3})|\\
&=D(n-3,t-1).
\end{align*}
\begin{align*}
|\chi^{11}|& =|11 \circ \mathcal{D}_{t-1}(\bar a_{n-3}) \cap \mathcal{D}_{t-1}(\bar a_{n-3}) \cap \mathcal{D}_{t-2}(a_{n-4})|\\
&=|\mathcal{D}_{t-2}(a_{n-4})|\\
&=D(n-4,t-2).
\end{align*}
\begin{align*}
|\chi^{10}|& =|10 \circ \mathcal{D}_{t}(a_{n-2}) \cap \mathcal{D}_{t-2}(a_{n-4}) \cap \mathcal{D}_{t-1}(1a_{n-4})|\\
&=|\mathcal{D}_{t-2}(a_{n-4})|\\
&=D(n-4,t-2).
\end{align*}

Since $\chi=\chi^{00} \cup \chi^{01} \cup \chi^{10} \cup \chi^{11}$, the proof is done. 
\end{proof}
\subsection{The Upper Bound}
 We will show that $N_3(n,t)$ is an upper bound for $N(n,3,t)$.

 We will prove it by induction on $n$ and $t$. Let us first address the base case. 

The base case is when $n=t+2$ and $t \geq 1$. since $D(n-2,t-1)=2$ and $D(n-4,t-2)=1$, $N_3(n,t)=4$. It is easy to see that after deleting $t$ symbols from each sequence of length $t+2$ there are only $2$ symbols left for each sequence. Since $N_3(n,t)=4$, $N(n,3,t) \leq N_3(n,t)$. 

Now let $t=1$ and $n \geq t+2$. Then we have $N_3(n,t)=1$. By \cite{LEVENSHTEIN2001310}, we know that every two distinct binary sequences of length $n$ can have at most $2$ common supersequences of length $2$. Therefore, we have $N(n,3,1) \leq 1=N_3(n,1)$. 

Now we will move onto the induction step. Assume that $N(n_0,3,t_0) \leq N_3(n_0,t_0)$ is true for all $n_0 \geq t_0+2$ and  $t_0\geq 1$ such that $n_0+t_0 < n+t$. We will need a few lemmas to complete this step.
 
\begin{lemma} \label{lemma: x_1^1,x_2^2,x_3^1}
Assume that $t \geq 1$ and $n \geq t+2$. Let $x_1$,$x_2$ and $x_3$ be three arbitrary sequences in $\mathbb{F}_2^n$ such that $x_1 \neq x_2 \neq x_3$ and $a=x_1^1=x_2^1=x_3^1$. Then 
$$|\mathcal{D}_t(x_1) \cap \mathcal{D}_t(x_2) \cap \mathcal{D}_t(x_3)|\leq N_3(n,t).$$
\end{lemma}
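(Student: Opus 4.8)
The plan is to induct by peeling off the common first bit. Write $a:=x_1^1=x_2^1=x_3^1$ and $x_i=ay_i$ with $y_i\in\mathbb{F}_2^{n-1}$; since the $x_i$ are distinct and share their first bit, the $y_i$ are pairwise distinct. Put $\chi=\mathcal{D}_t(x_1)\cap\mathcal{D}_t(x_2)\cap\mathcal{D}_t(x_3)$ and split it by the first symbol of its elements, $\chi=\chi^{a}\cup\chi^{\bar a}$ (a disjoint union). The whole argument hinges on the identity
$$N_3(n,t)=N_3(n-1,t)+N_3(n-2,t-1),$$
which follows immediately from the two recurrences $D(n-2,t-1)=D(n-3,t-1)+D(n-4,t-2)$ and $D(n-4,t-2)=D(n-5,t-2)+D(n-6,t-3)$. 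Hence it suffices to prove $|\chi^{a}|\le N_3(n-1,t)$ and $|\chi^{\bar a}|\le N_3(n-2,t-1)$ separately. I will work in the inductive regime $n\ge t+3$, $t\ge 2$; the boundary values $n=t+2$ and $t=1$ are precisely the base cases already established.

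The bound on $\chi^{a}$ is routine. Applying Lemma~\ref{lemma 1} with prefix $a$ (so $k=1$ and $t^{*}=t$) gives $\mathcal{D}_t(x_i)^{a}=a\circ\mathcal{D}_t(y_i)$, whence $\chi^{a}=a\circ\big(\mathcal{D}_t(y_1)\cap\mathcal{D}_t(y_2)\cap\mathcal{D}_t(y_3)\big)$. As the $y_i$ are three distinct length-$(n-1)$ sequences and $(n-1)+t<n+t$, the induction hypothesis yields $|\chi^{a}|\le N(n-1,3,t)\le N_3(n-1,t)$.

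The real content is $\chi^{\bar a}$, and the obstacle is that the three sequences need not align after their first bit. Let $p_i\ge 1$ be the length of the leading run of $a$'s in $x_i$ (if some $x_i=a^n$ then $\chi^{\bar a}=\emptyset$ and there is nothing to do), and set $s_i=x_i^{p_i+2}\cdots x_i^{n}$, the suffix after the first $\bar a$. Lemma~\ref{lemma 1} with prefix $\bar a$ gives $\mathcal{D}_t(x_i)^{\bar a}=\bar a\circ\mathcal{D}_{t-p_i}(s_i)$, so that $\chi^{\bar a}=\bar a\circ\big(\mathcal{D}_{t-p_1}(s_1)\cap\mathcal{D}_{t-p_2}(s_2)\cap\mathcal{D}_{t-p_3}(s_3)\big)$. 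The difficulty is that the radii $t-p_i$ and the centre lengths $n-1-p_i$ may all differ, so this is not directly a three-ball intersection of equal-size balls. I would resolve this by a dichotomy on the leading runs. If $p_1=p_2=p_3=1$, then each $s_i\in\mathbb{F}_2^{n-2}$, all radii equal $t-1$, and the $s_i$ are distinct, so the induction hypothesis applies and $|\chi^{\bar a}|=|\mathcal{D}_{t-1}(s_1)\cap\mathcal{D}_{t-1}(s_2)\cap\mathcal{D}_{t-1}(s_3)|\le N(n-2,3,t-1)\le N_3(n-2,t-1)$.

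In the complementary case some $p_{i_0}\ge 2$, and here I would simply discard the other two constraints, using $\chi^{\bar a}\subseteq\bar a\circ\mathcal{D}_{t-p_{i_0}}(s_{i_0})$ to get
$$|\chi^{\bar a}|\le|\mathcal{D}_{t-p_{i_0}}(s_{i_0})|\le D(n-1-p_{i_0},\,t-p_{i_0})\le D(n-3,t-2),$$
where the last step uses $p_{i_0}\ge 2$ together with the monotonicity $D(m-k,t-k)\le D(m,t)$ (valid since in $D(m,t)=\sum_{i=0}^{t}\binom{m-t}{i}$ the summand depends only on $m-t$, so lowering $m$ and $t$ equally only deletes terms). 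Finally $D(n-3,t-2)\le N_3(n-2,t-1)$: writing $D(n-3,t-2)=D(n-4,t-2)+D(n-5,t-3)$ and $N_3(n-2,t-1)=D(n-4,t-2)+2D(n-6,t-3)$, it remains to note $D(n-5,t-3)=D(n-6,t-3)+D(n-7,t-4)\le 2D(n-6,t-3)$, since $D(n-7,t-4)\le D(n-6,t-3)$ by the same monotonicity. Combining the two cases gives $|\chi^{\bar a}|\le N_3(n-2,t-1)$, and adding the two estimates closes the induction. The crux, as flagged, is the misalignment in $\chi^{\bar a}$; the key realization that makes it tractable is that a single sequence with a leading run of length at least $2$ already forces $\chi^{\bar a}$ to be small, so one never has to control a genuinely misaligned three-ball intersection.
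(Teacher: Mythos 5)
Your proof is correct and takes essentially the same route as the paper's: split $\chi$ by the first output symbol, bound $|\chi^{a}|$ by the induction hypothesis at $(n-1,t)$, bound $|\chi^{\bar{a}}|$ either by the induction hypothesis at $(n-2,t-1)$ (aligned case) or by a single deletion ball via $D(n-3,t-2)\le N_3(n-2,t-1)$, and conclude with the identity $N_3(n-1,t)+N_3(n-2,t-1)=N_3(n,t)$. The only cosmetic difference is the case split for $\chi^{\bar{a}}$: the paper distinguishes ``all first occurrences of $\bar{a}$ aligned at the minimal position $k$'' from ``misaligned,'' whereas you distinguish ``all leading runs of length $1$'' from ``some leading run of length at least $2$,'' so you dispatch the aligned-but-deep case with the single-ball bound instead of the three-ball induction plus monotonicity of $N_3$.
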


\begin{proof}Our proof follows the proof of Theorem 8 in \cite{Gabrys}. Let $\chi=\mathcal{D}_t(x_1) \cap \mathcal{D}_t(x_2) \cap \mathcal{D}_t(x_3).$ We have 
\begin{align*}
|\chi^{a}|& =|a \circ \mathcal{D}_{t}(x_1^2,...,x_1^n)\cap \mathcal{D}_{t}(x_2^2,...,x_2^n) \cap \mathcal{D}_{t}(x_3^2,...,x_3^n|\\
&\leq \max_{x \neq y \neq z,x,y,z \in \mathbb{F}_2^{n-1}}|\mathcal{D}_{t}(x)\cap \mathcal{D}_{t}(y) \cap \mathcal{D}_{t}(z)|\\
& \leq N_3(n-1,t).
\end{align*}

Suppose $x_1^k$ is the first occurrence of the symbol $\bar{a}$ in $x_1$ and the symbol $\bar{a}$ appears in $x_1$ not after it appears in $x_2$ and $x_3.$ If $x_1^k=x_2^k=x_3^k=\bar{a}$ we have 
\begin{align*}
|\chi^{\bar{a}}|& =|\bar{a} \circ \mathcal{D}_{t-(k-1)}(x_1^{k+1},...x_1^n)\cap \mathcal{D}_{t-(k-1)}(x_2^{k+1},...,x_2^n) \cap \mathcal{D}_{t-(k-1)}(x_3^{k+1},...,x_3^n)|\\
&\leq \max_{x \neq y \neq z,x,y,z \in \mathbb{F}_2^{n-k}}|\mathcal{D}_{t-(k-1)}(x)\cap \mathcal{D}_{t-(k-1)}(y) \cap \mathcal{D}_{t-(k-1)}(z)|\\
& \leq N_3(n-k,t-k+1)\\
& \leq N_3(n-2,t-1).
\end{align*}
If one of $x_2^k$ or $x_3^k$ is equal to $a$, say $x_2^k=a$ then we have 
\begin{align*}
|\chi^{\bar{a}}|& \leq |\bar{a} \circ  \mathcal{D}_{t-k}(x_2^{k+2},...x_2^n)|\\
&\leq D(n-k-1,t-k)\\
& \leq D(n-3,t-2)\\
& \leq N_3(n-2,t-1).
\end{align*}
Since $|\chi^a|+|\chi^{\bar{a}}|\leq N_3(n-1,t)+N_3(n-2,t-1)=N_3(n,t)$, the proof is done. 
\end{proof}

Due to Lemma ~\ref{lemma: x_1^1,x_2^2,x_3^1}, it is sufficient to consider any arbitrary sequences $x_1$, $x_2$ and $x_3$ in 
$\mathbb{F}_2^{n}$ such that $x_1^1=1$, $x_2^1=0$ and $x_3^1=0$ or $x_1^1=0$, $x_2^1=1$ and $x_3^1=1$.
Due to the symmetry it suffices to deal with the case where $x_1^1=1$, $x_2^1=0$ and $x_3^1=0$. We need a few lemmas.

\begin{lemma} \label{lemma: x_1^2=x_2^2=x_3^2}
Assume that $t \geq 1$ and $n \geq t+2$. Let $x_1$,$x_2$ and $x_3$ be three arbitrary sequences in $\mathbb{F}_2^n$ such that $x_1^1=1$, $x_2^1=0$ and $x_3^1=0$. If $x_1^2=x_2^2=x_3^2$, then 
$$|\mathcal{D}_t(x_1) \cap \mathcal{D}_t(x_2) \cap \mathcal{D}_t(x_3)|\leq N_3(n,t).$$
\end{lemma}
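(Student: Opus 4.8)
The plan is to reduce the three–ball intersection to single deletion balls using Lemma~\ref{lemma 1}, exploiting the fact that the second symbols coincide. Write $b := x_1^2 = x_2^2 = x_3^2$, so that $x_1 = 1b\cdots$ while $x_2 = 0b\cdots$ and $x_3 = 0b\cdots$, and split into the two sub-cases $b=0$ and $b=1$; these are treated identically up to interchanging the roles of the two prefixes, so I would do $b=0$ in full and remark that $b=1$ is analogous. In each sub-case I partition $\chi = \mathcal{D}_t(x_1)\cap\mathcal{D}_t(x_2)\cap\mathcal{D}_t(x_3)$ by the first symbol of its elements, $\chi = \chi^{0}\cup\chi^{1}$, and apply Lemma~\ref{lemma 1} to each of the three centers to peel off that first symbol. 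The point is that the leading symbol is cheap to match in some centers and expensive in others, so after peeling, one ball in each piece has a strictly smaller radius than $t$, and I bound the whole (triple) intersection by containment in that one smaller ball.

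Concretely, in the case $b=0$ we have $x_1 = 10\cdots$, $x_2 = 00\cdots$, $x_3 = 00\cdots$. For $\chi^{0}$, the first $0$ of $x_1$ sits at position $2$, so Lemma~\ref{lemma 1} gives $\mathcal{D}_t(x_1)^{0} = 0\circ\mathcal{D}_{t-1}(x_1^3\cdots x_1^n)$ with a length-$(n-2)$ tail; since $\chi^{0}$ is contained in this set, $|\chi^{0}| \le D(n-2,t-1)$. For $\chi^{1}$, the first $1$ of $x_2$ occurs at some position $k\ge 3$ (and $\chi^1$ is empty if no such $1$ exists), so Lemma~\ref{lemma 1} reduces the $x_2$-ball to radius $t-(k-1)\le t-2$ over a length-$(n-k)$ tail, whence $|\chi^{1}| \le D(n-k,\,t-k+1)\le D(n-3,t-2)$, using that $D(m,s)\ge D(m-1,s-1)$ (immediate from $D(m,s)=D(m-1,s-1)+\binom{m-s}{s}$) so the bound is maximized at $k=3$. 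Finally I combine the two pieces with the identity $D(n-3,t-2)=D(n-4,t-2)+D(n-5,t-3)\le 2D(n-4,t-2)$, obtaining
$$|\chi| = |\chi^{0}| + |\chi^{1}| \le D(n-2,t-1) + 2D(n-4,t-2) = N_3(n,t).$$
The case $b=1$ is the mirror image: there $x_2,x_3 = 01\cdots$ match the prefix $1$ at position $2$, giving $|\chi^{1}|\le D(n-2,t-1)$, while $x_1 = 11\cdots$ reaches its first $0$ only at position $\ge 3$, giving $|\chi^{0}|\le D(n-3,t-2)\le 2D(n-4,t-2)$, and the same sum bounds $|\chi|$ by $N_3(n,t)$.

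The main work is bookkeeping rather than conceptual: I must track, in each piece and each sub-case, exactly which center has its radius reduced and by how much, verify that the output lengths of the three balls agree so the intersection is well defined, and keep the monotonicity and recursion facts for $D(\cdot,\cdot)$ straight. The most delicate step is the bound on the "expensive" piece: justifying that the relevant first occurrence is at position $\ge 3$, that missing occurrences only make the piece empty, and that $D(n-k,t-k+1)$ is maximized at $k=3$ via $D(m,s)\ge D(m-1,s-1)$. I note that this argument is self-contained and does not invoke the induction hypothesis on three-ball intersections, since bounding each piece by a single deletion ball already suffices.
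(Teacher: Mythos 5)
Your proof is correct, and it follows the same skeleton as the paper's: partition $\chi$ by the first output symbol, peel that symbol with Lemma~\ref{lemma 1}, and compare the resulting single-ball sizes with $N_3(n,t)=D(n-2,t-1)+2D(n-4,t-2)$ via the recursion and the monotonicity $D(m,s)\ge D(m-1,s-1)$. The genuine difference is in how the ``expensive'' piece is bounded. In Case~$b=0$ the paper bounds $\chi^{1}$ using \emph{two} centers ($x_2$ and $x_3$), which forces a sub-case split on whether $x_3^k=1$ or $x_3^k=0$ and invokes Levenshtein's two-ball maximum intersection $\max_{x\neq y}|\mathcal{D}_s(x)\cap\mathcal{D}_s(y)|\le 2D(\cdot,\cdot)$ (this in turn silently needs $x_2\neq x_3$ to guarantee the peeled tails are distinct); it does the analogous thing for $\chi^{1}$ in Case~$b=1$. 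You instead bound each piece by containment in a \emph{single} deletion ball of the most constrained center, which makes the argument self-contained: no two-ball result, no sub-case analysis, no distinctness hypothesis needed, at the price of the slightly cruder per-piece bound $D(n-3,t-2)$ in place of $2D(n-5,t-3)$ (or $2D(n-4,t-2)$), which your inequality $D(n-3,t-2)=D(n-4,t-2)+D(n-5,t-3)\le 2D(n-4,t-2)$ shows is still good enough. Both proofs close the same way; yours is arguably cleaner, and your handling of the edge cases (no occurrence of the needed symbol makes the piece empty; negative radius makes $D$ vanish) is sound.
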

\begin{proof}
Case 1: Assume that $x_1^2=x_2^2=x_3^2=0$.The we have 
$x_1=10x_1^{\prime}$, $x_2=00x_2^{\prime}$ and $x_3=00x_3^{\prime}$.
Note that $x_2^{\prime} \neq x_3^{\prime}$. Suppose $x_2^k$ is the first occurrence of the symbol 1 in $x_2$ and the symbol 1 appears in $x_2$ not after it appears in $x_3$. If $x_2^k=x_3^k=1$ we have 
\begin{align*}
|\chi^{1}|& \leq |1 \circ \mathcal{D}_{t-(k-1)}(x_2^{k+1},...,x_2^n) \cap \mathcal{D}_{t-(k-1)}(x_3^{k+1},...,x_3^n)|\\
&\leq \max_{x \neq y, x,y \in \mathbb{F}_2^{n-k}}|\mathcal{D}_{t-(k-1)}(x)\cap \mathcal{D}_{t-(k-1)}(y)|\\
& \leq 2D(n-2-k,t-k)\\
& \leq 2D(n-5,t-3).
\end{align*}

If $x_3^k=0$ we have 
\begin{align*}
|\chi^{1}|& \leq |1 \circ \mathcal{D}_{t-k}(x_3^{k+2},...,x_3^n)|\\
&\leq D(n-k-1,t-k)\\
& \leq D(n-4,t-3)\\
& \leq 2D(n-5,t-3).
\end{align*}
We also have that 
\begin{align*}
|\chi^{0}|& \leq |0 \circ \mathcal{D}_{t-1}(x_1^{\prime})|\\
& \leq D(n-2,t-1).
\end{align*}
Since $D(n-2,t-1)=D(n-3,t-1)+D(n-4,t-2)$ and $D(n-5,t-3) \leq D(n-4,t-2)$, we have 
$$|\chi^{1}|+|\chi^{0}| \leq N_3(n,t).$$

Case 2: Assume that $x_1^2=x_2^2=x_3^2=1$. Suppose $x_1^k$  is the first occurrence of the symbol 0 in $x_1$. We have that 
\begin{align*}
|\chi^{0}|& \leq |0 \circ \mathcal{D}_{t-(k-1)}({x_1^{k+1},...,x_1^n})|\\
& \leq D(n-k,t-k+1)\\
& \leq D(n-3,t-2).
\end{align*}
We also have 
\begin{align*}
|\chi^{1}|& \leq |1 \circ \mathcal{D}_{t-1}(\mathbf{x_2}^{\prime}) \cap \mathcal{D}_{t-1}(\mathbf{x_3}^{\prime})|\\
& \leq 2D(n-4,t-2).
\end{align*} 
Since 
\begin{align*}
D(n-3,t-2) &=D(n-4,t-2)+D(n-5,t-3)\\
& \leq D(n-3,t-1)+D(n-5,t-3),
\end{align*} 
we have 
$$|\chi^{1}|+|\chi^{0}| \leq N_3(n,t). \qedhere$$
\end{proof}

Now we will show that it is sufficient to consider $x_1$,$x_2$ and $x_3$ three arbitrary sequences in $\mathbb{F}_2^n$ such that $x_1^1=1$, $x_2^1=0$, $x_3^1=0$,$x_1^2=0$, $x_2^2=1$ and $x_3^2=1$.

\begin{lemma} \label{x_1^1=1,x_2^1=0,x_3^1=0}
Assume $t \geq 1$ and $n \geq t+2$. Let $x_1$,$x_2$ and $x_3$ be three arbitrary sequences in $\mathbb{F}_2^n$ such that $x_1^1=1$, $x_2^1=0$, $x_3^1=0$.

(a)$x_1^2=1$, $x_2^2=0$ and $x_3^2=0$.

(b)$x_1^2=1$, $x_2^2=1$ and $x_3^2=0$. 

(c)$x_1^2=0$, $x_2^2=1$ and $x_3^2=0$. 

Then under (a),or (b), or (c)
$$|\mathcal{D}_t(x_1) \cap \mathcal{D}_t(x_2) \cap \mathcal{D}_t(x_3)|\leq N_3(n,t).$$
\end{lemma}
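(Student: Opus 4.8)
The plan is to dispose of each of the three configurations (a), (b), (c) by the device already used in Lemma~\ref{lemma: x_1^2=x_2^2=x_3^2}: partition the intersection $\chi=\mathcal{D}_t(x_1)\cap\mathcal{D}_t(x_2)\cap\mathcal{D}_t(x_3)$ according to the first symbol of its elements, writing $\chi=\chi^{0}\cup\chi^{1}$ as a disjoint union (each element of $\chi$ is a nonempty binary string), so that $|\chi|=|\chi^{0}|+|\chi^{1}|$. For each part I would use that $\chi^{b}\subseteq\mathcal{D}_t(x_i)^{b}$ for every $i$, hence $|\chi^{b}|\leq\min_i|\mathcal{D}_t(x_i)^{b}|$, and evaluate each $|\mathcal{D}_t(x_i)^{b}|$ exactly through Lemma~\ref{lemma 1}: if $b$ first occurs in $x_i$ at position $k$, then $|\mathcal{D}_t(x_i)^{b}|=|\mathcal{D}_{t-k+1}(x_i^{k+1},\dots,x_i^n)|\leq D(n-k,t-k+1)$. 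The governing principle is that choosing the center whose relevant symbol appears \emph{latest} makes $k$ largest and the ball smallest; since $n-t$ is constant along the family $D(n-k,t-k+1)=\sum_{j=0}^{t-k+1}\binom{n-t-1}{j}$, this quantity is nonincreasing in $k$.

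Concretely, in case (a), with $x_1=11\cdots$, $x_2=00\cdots$, $x_3=00\cdots$, the symbol $0$ first appears in $x_1$ at a position $\geq 3$ and $1$ first appears in $x_2$ at a position $\geq 3$, giving $|\chi^{0}|\leq D(n-3,t-2)$ and $|\chi^{1}|\leq D(n-3,t-2)$. Case (b), with $x_1=11\cdots$, $x_2=01\cdots$, $x_3=00\cdots$, is the same in spirit: bound $|\chi^{0}|$ through $x_1$ and $|\chi^{1}|$ through $x_3$ (whose first $1$ sits at a position $\geq 3$, unlike $x_2$), again yielding $|\chi^{0}|+|\chi^{1}|\leq 2D(n-3,t-2)$. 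In case (c), with $x_1=10\cdots$, $x_2=01\cdots$, $x_3=00\cdots$, the latest first-$0$ is in $x_1$ (position $2$) and the latest first-$1$ is in $x_3$ (position $\geq 3$), so one gets the slightly asymmetric estimate $|\chi^{0}|\leq D(n-2,t-1)$ and $|\chi^{1}|\leq D(n-3,t-2)$. (In cases (a) and (b) one could instead invoke the two-ball Levenshtein bound $2D(m-2,s-1)$ for $\chi^{1}$, as in the previous lemma, but the single-ball estimate already suffices.)

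It then remains to verify the numerical inequalities $2D(n-3,t-2)\leq N_3(n,t)$ for (a) and (b), and $D(n-2,t-1)+D(n-3,t-2)\leq N_3(n,t)$ for (c), where $N_3(n,t)=D(n-2,t-1)+2D(n-4,t-2)$. Using the recurrence $D(n,t)=D(n-1,t)+D(n-2,t-1)$ these reduce respectively to $D(n-2,t-1)\geq 2D(n-5,t-3)$ and $D(n-3,t-2)\leq 2D(n-4,t-2)$, both of which follow from the closed form together with the fact that $D$ is nondecreasing along lines of constant $n-t$. These are routine and I would not dwell on them. Finally, relabeling $x_2$ and $x_3$ leaves $\chi$ unchanged and preserves $x_2^1=x_3^1=0$, so the patterns $(x_1^2,x_2^2,x_3^2)=(1,0,1)$ and $(0,0,1)$ are covered by (b) and (c); combined with Lemma~\ref{lemma: x_1^2=x_2^2=x_3^2} (which handles $(0,0,0)$ and $(1,1,1)$) this exhausts every second-symbol pattern except $(0,1,1)$, which is exactly the remaining configuration announced just before the lemma.

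I do not anticipate a real obstacle: the lemma is essentially a bookkeeping exercise in selecting, for each of $\chi^{0}$ and $\chi^{1}$, the center whose relevant symbol occurs latest, followed by an elementary check of the resulting $D$-inequalities. The one point needing mild care is case (c), where the two parts produce unequal bounds $D(n-2,t-1)$ and $D(n-3,t-2)$ rather than two equal terms; here one must keep the sharper estimate for $\chi^{0}$ coming from $x_1$ and resist bounding it through $x_2$ or $x_3$, whose first $0$ sits already at position $1$.
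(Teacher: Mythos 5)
Your proposal is correct and follows essentially the same route as the paper: decompose $\chi$ by first symbol, bound $\chi^{0}$ and $\chi^{1}$ via Lemma~\ref{lemma 1} through the center whose relevant symbol occurs latest (giving $2D(n-3,t-2)$ in cases (a) and (b), and $D(n-2,t-1)+D(n-3,t-2)$ in case (c)), then check these against $N_3(n,t)=D(n-2,t-1)+2D(n-4,t-2)$. The only differences are cosmetic: you spell out the final $D$-inequalities and the relabeling argument showing the three patterns suffice, which the paper leaves implicit.
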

\begin{proof}
Let $\chi=\mathcal{D}_t(x_1) \cap \mathcal{D}_t(x_2) \cap \mathcal{D}_t(x_3).$

Under (a):

Let $x_1^k$ be the first occurrence of symbol 0 in $x_1$, we have
\begin{align*}
|\chi^{0}|& \leq |0 \circ \mathcal{D}_{t-(k-1)}(x_1^{k+1},...,x_1^n)|\\
&\leq D(n-k,t-k+1)\\
& \leq D(n-3,t-2).
\end{align*}
Similarly, let $x_2^k$ be the first occurrence of symbol 1 in $x_2$, we have
\begin{align*}
|\chi^{1}|& \leq |0 \circ \mathcal{D}_{t-(k-1)}(x_2^{k+1},...,x_2^n)|\\
&\leq D(n-k,t-k+1)\\
& \leq D(n-3,t-2).
\end{align*}
Therefore, we have
$$|\chi^{1}|+|\chi^{0}| \leq N_3(n,t).$$

Under (b):

Let $x_1^k$ be the first occurrence of symbol 0 in $x_1$ we have
\begin{align*}
|\chi^{0}|& \leq |0 \circ \mathcal{D}_{t-(k-1)}(x_1^{k+1},...,x_1^n)|\\
&\leq D(n-k,t-k+1)\\
& \leq D(n-3,t-2).
\end{align*}
Similarly, let $x_3^k$ be the first occurrence of symbol 1 in $x_3$, we have
\begin{align*}
|\chi^{1}|& \leq |0 \circ \mathcal{D}_{t-(k-1)}(x_3^{k+1},...,x_3^n)|\\
&\leq D(n-k,t-k+1)\\
& \leq D(n-3,t-2).
\end{align*}
Therefore, we have
$$|\chi^{1}|+|\chi^{0}| \leq N_3(n,t).$$

Under(c):

If $x_3^k$ be the first occurrence of symbol 1 in $x_3$, we have 
\begin{align*}
|\chi^{1}|& \leq |0 \circ \mathcal{D}_{t-(k-1)}(x_3^{k+1},...,x_3^n)|\\
&\leq D(n-k,t-k+1)\\
& \leq D(n-3,t-2).
\end{align*}

\begin{align*}
|\chi^{0}|& \leq |0 \circ \mathcal{D}_{(t-1)}(x_1^{3},...,x_1^n)|\\
& \leq D(n-2,t-1).
\end{align*}
Therefore, we have
$$|\chi^{1}|+|\chi^{0}| \leq N_3(n,t). \qedhere$$
\end{proof}

\begin{lemma} \label{x_1^{12}}
Assume that $t \geq 1$ and $n \geq t+2$. Let $x_1$,$x_2$ and $x_3$ be three arbitrary sequences in $\mathbb{F}_2^n$ such that $x_1^1=1$, $x_2^1=0$ and $x_3^1=0$ and $x_1^2=0$, $x_2^2=1$ and $x_3^1=1$  Then we have 
$$|\mathcal{D}_t(x_1) \cap \mathcal{D}_t(x_2) \cap \mathcal{D}_t(x_3)|\leq N_3(n,t).$$
\end{lemma}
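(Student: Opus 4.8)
The plan is to partition the intersection $\chi=\mathcal{D}_t(x_1)\cap\mathcal{D}_t(x_2)\cap\mathcal{D}_t(x_3)$ by the first symbol of its elements, writing $\chi=\chi^{0}\cup\chi^{1}$; this is a genuine partition since each element has length $n-t\ge 2$. The whole point is that the configuration $x_1=10\cdots$, $x_2=01\cdots$, $x_3=01\cdots$ is \emph{asymmetric} with respect to this split. To begin a common subsequence with $0$, only $x_1$ must spend a deletion to skip its leading $1$, while $x_2,x_3$ already start with $0$; conversely, to begin with $1$, it is $x_2$ and $x_3$ that must each spend a deletion to skip their leading $0$, while $x_1$ already starts with $1$. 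I would make this precise with Lemma~\ref{lemma 1}.

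For $\chi^{0}$ (prefix $v=0$), the smallest $k$ with $0$ a subsequence of $x_1^1\cdots x_1^k$ is $k=2$ (since $x_1^1=1,\,x_1^2=0$), whereas $k=1$ for $x_2,x_3$; thus
$$\chi^{0}=0\circ\Big[\mathcal{D}_{t-1}(x_1^3,\dots,x_1^n)\cap\mathcal{D}_{t}(x_2^2,\dots,x_2^n)\cap\mathcal{D}_{t}(x_3^2,\dots,x_3^n)\Big].$$
Discarding the last two balls and using that $(x_1^3,\dots,x_1^n)\in\mathbb{F}_2^{n-2}$, I get $|\chi^{0}|\le|\mathcal{D}_{t-1}(x_1^3,\dots,x_1^n)|\le D(n-2,t-1)$. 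For $\chi^{1}$ (prefix $v=1$) the roles reverse: $x_1$ keeps radius $t$ but $x_2,x_3$ drop to radius $t-1$, so
$$\chi^{1}=1\circ\Big[\mathcal{D}_{t}(x_1^2,\dots,x_1^n)\cap\mathcal{D}_{t-1}(x_2^3,\dots,x_2^n)\cap\mathcal{D}_{t-1}(x_3^3,\dots,x_3^n)\Big].$$
Here I would discard the $x_1$ ball and apply Levenshtein's two‑ball bound (the $\ell=2$ result recorded in Section~2) to the two radius‑$(t-1)$ balls over $\mathbb{F}_2^{n-2}$; their centers satisfy $(x_2^3,\dots,x_2^n)\neq(x_3^3,\dots,x_3^n)$ because $x_2\neq x_3$ agree on the prefix $01$. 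This yields $|\chi^{1}|\le 2D\big((n-2)-2,(t-1)-1\big)=2D(n-4,t-2)$. Summing, $|\chi|=|\chi^{0}|+|\chi^{1}|\le D(n-2,t-1)+2D(n-4,t-2)=N_3(n,t)$, which is exactly the target.

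The main obstacle is resisting the temptation to bound both halves by the three‑ball induction hypothesis: that route gives $|\chi|\le 2N_3(n-1,t)>N_3(n,t)$ and is too lossy, and every symmetric discarding of a single ball overshoots by one recursion level. The decisive idea is to exploit the asymmetry so that the $0$‑side is pinned by a single reduced‑radius ball (giving precisely $D(n-2,t-1)$) and the $1$‑side by the two‑ball bound (giving precisely $2D(n-4,t-2)$); since these are exactly the two summands of $N_3(n,t)$, no slack is lost, and comparison with the extremal triple $x_1=10a_{n-2},\,x_2=01a_{n-2},\,x_3=0101a_{n-4}$ confirms tightness. To finish I would verify the edge parameters — in the induction step $t\ge 2$ and $n\ge t+2$, so $t-1\ge 1$ and $n-2>t-1$ — which guarantee that the cited two‑ball bound and the single‑ball size bound $D(n-2,t-1)$ are valid.
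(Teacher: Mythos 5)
Your proof is correct and takes essentially the same route as the paper: it bounds $|\chi^{0}|$ by the single reduced-radius ball of $x_1$, giving $D(n-2,t-1)$, and bounds $|\chi^{1}|$ by discarding the $x_1$ ball and applying Levenshtein's two-ball bound to the suffixes of $x_2,x_3$, giving $2D(n-4,t-2)$, whose sum is exactly $N_3(n,t)$. Your extra verifications (that $(x_2^3,\dots,x_2^n)\neq(x_3^3,\dots,x_3^n)$, and the edge-parameter checks) are details the paper leaves implicit.
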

\begin{proof}
Let $\chi=\mathcal{D}_t(x_1) \cap \mathcal{D}_t(x_2) \cap \mathcal{D}_t(x_3).$ We have 
\begin{align*}
|\chi^{1}|& \leq |1 \circ \mathcal{D}_{t-1}(x_2^{3},...,x_2^n) \cap \mathcal{D}_{t-1}(x_3^{3},...,x_3^n)|\\
&\leq \max_{x \neq y,x,y \in \mathbb{F}_2^{n-2}}|\mathcal{D}_{t-1}(x)\cap \mathcal{D}_{t-1}(y)|\\
& \leq 2D(n-4,t-2).
\end{align*}

Also 
\begin{align*}
|\chi^{0}|& \leq |0 \circ \mathcal{D}_{(t-1)}(x_1^{3},...,x_1^n)|\\
& \leq D(n-2,t-1).
\end{align*}
Since $2D(n-4,t-2)+D(n-2,t-1)=N_3(n,t)$, the proof is done. 
\end{proof}
By Lemmas \ref{lemma: x_1^1,x_2^2,x_3^1}, \ref{lemma: x_1^2=x_2^2=x_3^2}, \ref{x_1^1=1,x_2^1=0,x_3^1=0}, and \ref{x_1^{12}}, the induction step is finished and we have the following theorem.
\begin{theorem}
For $t \geq 1$ and $n \geq t+2$, we have
$$N(n,3,t) \leq N_3(n,t)$$
\end{theorem}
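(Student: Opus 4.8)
The plan is to complete the induction on $n+t$ that the preceding discussion has already set up: we assume $N(n_0,3,t_0)\le N_3(n_0,t_0)$ for every pair with $n_0\ge t_0+2$, $t_0\ge 1$, and $n_0+t_0<n+t$, the base cases $n=t+2$ and $t=1$ having been checked directly. Fix distinct $x_1,x_2,x_3\in\mathbb{F}_2^n$ and write $\chi=\mathcal{D}_t(x_1)\cap\mathcal{D}_t(x_2)\cap\mathcal{D}_t(x_3)$. Since the three centers always share one of the two symbols as a first letter, I would first dispatch the case $x_1^1=x_2^1=x_3^1$ by Lemma~\ref{lemma: x_1^1,x_2^2,x_3^1}, whose bound $|\chi^a|+|\chi^{\bar a}|\le N_3(n-1,t)+N_3(n-2,t-1)$ equals $N_3(n,t)$ because $N_3$ satisfies the same recursion $N_3(n,t)=N_3(n-1,t)+N_3(n-2,t-1)$ inherited from $D(n,t)=D(n-1,t)+D(n-2,t-1)$.

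Once the first symbols are not all equal, exactly one of them differs from the other two. Using that $\chi$ is symmetric in its three centers I relabel so that the odd one out is $x_1$, and if necessary I replace every $x_i$ by its bitwise complement, an operation that preserves $|\chi|$ and interchanges the two normalized configurations $(x_1^1,x_2^1,x_3^1)=(1,0,0)$ and $(0,1,1)$. This reduces the problem to $x_1^1=1$, $x_2^1=0$, $x_3^1=0$, exactly the setting of the remaining lemmas.

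It then remains to bound $|\chi^0|+|\chi^1|$ for each of the eight patterns of second symbols $(x_1^2,x_2^2,x_3^2)$. The constant patterns $(0,0,0)$ and $(1,1,1)$ are Lemma~\ref{lemma: x_1^2=x_2^2=x_3^2}; the pattern $(0,1,1)$ is Lemma~\ref{x_1^{12}}; and $(1,0,0)$, $(1,1,0)$, $(0,1,0)$ are cases (a), (b), (c) of Lemma~\ref{x_1^1=1,x_2^1=0,x_3^1=0}. The two leftover patterns $(0,0,1)$ and $(1,0,1)$ reduce to cases (c) and (b) respectively upon swapping $x_2$ and $x_3$, which is harmless since both begin with $0$ and $\chi$ is symmetric. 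Each branch yields $|\chi^0|+|\chi^1|\le N_3(n,t)$, and since $\chi=\chi^0\cup\chi^1$ this closes the induction.

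Because all the analytic content lives in the four lemmas, the only real danger in this final assembly is bookkeeping: I must verify that the listed cases together with the $x_2\leftrightarrow x_3$ symmetry genuinely exhaust all eight second-symbol patterns, that the complementation step used to normalize the first symbols leaves the induction parameters $n$ and $t$ untouched, and that each lemma's right-hand side collapses to $N_3(n,t)$ through the recursion $D(n-2,t-1)=D(n-3,t-1)+D(n-4,t-2)$ and monotonicity estimates such as $D(n-5,t-3)\le D(n-4,t-2)$ invoked above. Getting this case accounting watertight is the step I would most want to double-check.
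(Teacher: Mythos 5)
Your proposal is correct and follows essentially the same route as the paper: the same induction on $n+t$ with the same base cases, Lemma~\ref{lemma: x_1^1,x_2^2,x_3^1} for equal first symbols, complementation/relabeling symmetry to normalize to $(x_1^1,x_2^1,x_3^1)=(1,0,0)$, and Lemmas~\ref{lemma: x_1^2=x_2^2=x_3^2}, \ref{x_1^1=1,x_2^1=0,x_3^1=0}, and \ref{x_1^{12}} for the second-symbol patterns. Your explicit handling of the two leftover patterns $(0,0,1)$ and $(1,0,1)$ via the $x_2\leftrightarrow x_3$ swap is a point the paper leaves implicit, and your accounting of it is accurate.
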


\section{The Intersection of $\ell \geq 4$ Deletion Balls}
In this section we will prove that 
$$N(n,\ell,t)=N_{\ell}(n,t):=\sum_{i=1}^{\ell-2}D(n-2i,t-i)+2D(n-2(\ell-1),t-(\ell-1)).$$

\subsection{The Lower Bound}
We will show that $N_{\ell}(n,t)$ is a lower bound for $N(n,\ell,t)$ by showing that $x_1=10a_{n-2}$, $x_2=01a_{n-2}$, $x_j=\underbrace{0101\cdots01}_{2(j-1)}a_{n-2(j-1)}$  for $j \in \{2,\cdots,\ell\}$ satisfy 
$$|\cap_{i=1}^{\ell}\mathcal{D}_t(x_i)|=N_{\ell}(n,t).$$

\begin{theorem}
For $ \ell \geq 4$, $t \geq 1$, and $n \geq t+\ell-1$,
$$N(n,\ell,t) \geq \sum_{i=1}^{\ell-2}D(n-2i,t-i)+2D(n-2(\ell-1),t-(\ell-1))$$
\end{theorem}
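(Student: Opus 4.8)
The plan is to prove by induction on $\ell$ that the displayed sequences satisfy $|\chi| \geq N_\ell(n,t)$, taking the case $\ell = 3$ from the previous section as the base and handling $t = 1$ separately. Throughout write $\chi = \bigcap_{i=1}^{\ell}\mathcal{D}_t(x_i)$ and split it by the leading symbol as the disjoint union $\chi = \chi^{0} \cup \chi^{1}$, so that $|\chi| = |\chi^{0}| + |\chi^{1}|$. Two elementary facts drive the argument: the recursion $N_{\ell}(n,t) = D(n-2,t-1) + N_{\ell-1}(n-2,t-1)$, obtained by shifting the summation index in the definition of $N_\ell$, and the identity $10\,a_{m} = a_{m+2}$, i.e.\ prepending $10$ to an alternating sequence that starts with $1$ only lengthens it.

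First I would compute $|\chi^{0}|$. Peeling the leading $0$ with Lemma~\ref{lemma 1} gives $\mathcal{D}_t(x_1)^{0} = 0 \circ \mathcal{D}_{t-1}(a_{n-2})$ (the first $0$ of $x_1 = 10a_{n-2}$ is in position $2$) and $\mathcal{D}_t(x_j)^{0} = 0 \circ \mathcal{D}_{t}(x_j^2 \cdots x_j^n)$ for $j \geq 2$. The key point is that $a_{n-2}$ is a subsequence of each $x_j^2\cdots x_j^n$: deleting the leading $0$ of $x_j$ leaves a string consisting of two alternating blocks meeting in a single repeated pair, and deleting one bit of that pair produces $a_{n-2}$. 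By Lemma~\ref{lemma 2} this yields $\mathcal{D}_{t-1}(a_{n-2}) \subseteq \mathcal{D}_{t}(x_j^2\cdots x_j^n)$ for every $j \geq 2$, so the smallest ball absorbs the others and $|\chi^{0}| = |\mathcal{D}_{t-1}(a_{n-2})| = D(n-2,t-1)$.

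Next I would compute $|\chi^{1}|$. Peeling the leading $1$ gives $\mathcal{D}_t(x_1)^{1} = 1 \circ \mathcal{D}_t(0a_{n-2})$ and, since the first $1$ of $x_j$ lies in position $2$ for $j \geq 2$, $\mathcal{D}_t(x_j)^{1} = 1 \circ \mathcal{D}_{t-1}((01)^{j-2}a_{n-2(j-1)})$. Reindexing with $i = j-1$, the residual centers are $z_i := (01)^{i-1}a_{n-2i}$ for $i = 1,\dots,\ell-1$; using $10\,a_{n-4} = a_{n-2}$ one sees that $z_1 = a_{n-2}$ and that $z_1,\dots,z_{\ell-1}$ are precisely the $(\ell-1)$-term construction for the parameters $(n-2,\ell-1,t-1)$. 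The extra ball $\mathcal{D}_t(0a_{n-2})$ coming from $x_1$ is redundant, since $\mathcal{D}_{t-1}(z_1) = \mathcal{D}_{t-1}(a_{n-2}) \subseteq \mathcal{D}_t(0a_{n-2})$ by Lemma~\ref{lemma 2}. Hence $\chi^{1} = 1 \circ \bigcap_{i=1}^{\ell-1}\mathcal{D}_{t-1}(z_i)$, and because $(n-2,\ell-1,t-1)$ again meets the hypotheses $\ell-1 \geq 3$ and $n-2 \geq (t-1)+(\ell-1)-1$, the induction hypothesis gives $|\chi^{1}| \geq N_{\ell-1}(n-2,t-1)$. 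Combining the two parts, $|\chi| \geq D(n-2,t-1) + N_{\ell-1}(n-2,t-1) = N_\ell(n,t)$, which is the claimed lower bound.

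The base cases and one boundary phenomenon are where the care lies. The reduction lowers $t$ as well as $\ell$, so $t = 1$ must be handled on its own: there $N_\ell(n,1) = 1$, and one checks directly that $\bar{a}_{n-1}$ is a common $(n-1)$-subsequence (delete the leading $1$ of $x_1 = a_n$, and delete the final $1$ of the prefix $(01)^{j-1}$ of each $x_j$, $j \geq 2$), giving $|\chi| \geq 1$. The step I expect to be the main obstacle is that the stated construction is only well defined when $n \geq 2(\ell-1)$, a condition that $n \geq t+\ell-1$ forces exactly when $t \geq \ell-1$; the induction above lives entirely inside this regime, since $n \geq 2(\ell-1)$ is preserved under $(n,\ell,t)\mapsto(n-2,\ell-1,t-1)$ and the recursion bottoms out at $\ell = 3$ or $t = 1$. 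In the remaining range $t+\ell-1 \leq n < 2(\ell-1)$ the suffix $a_{n-2(\ell-1)}$ has non-positive length and the last sequences degenerate, so one must replace them by a different extremal family; here $N_\ell(n,t)$ collapses to the $\ell$-independent value $N_{t+1}(n,t)-1$, which suggests keeping the first $t+1$ sequences $10a_{n-2}, 01a_{n-2},\dots,(01)^{t}a_{n-2t}$ and adjoining further distinct sequences each engineered to delete exactly one shared subsequence. Making this degenerate construction precise, and verifying the simultaneous center-matching in the $\chi^{1}$ step for all $j$, are the parts that require the most attention.
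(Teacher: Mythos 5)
Your proposal is correct and follows essentially the same route as the paper's own proof: the identical extremal family $x_1=10a_{n-2}$, $x_j=(01)^{j-1}a_{n-2(j-1)}$, the same split of $\chi$ by leading symbol, absorption of the redundant balls via Lemmas~\ref{lemma 1} and~\ref{lemma 2}, and the recursion $N_{\ell}(n,t)=D(n-2,t-1)+N_{\ell-1}(n-2,t-1)$ closed by induction on $\ell$ with the $\ell=3$ case as base. In fact your explicit center-matching arguments (the unique $11$ junction collapsing to $a_{n-2}$ after one deletion, and the residual centers after peeling a leading $1$ forming exactly the $(\ell-1)$-family for parameters $(n-2,t-1)$) supply the justification for the two set identities that the paper simply asserts.

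One comment on the obstacle you raise at the end: the degeneracy for $t+\ell-1\le n<2(\ell-1)$ (equivalently $t<\ell-1$) is a genuine issue, but it is a gap in the paper's proof as well, not a shortfall of yours relative to it --- the paper uses the same construction with no caveat, and $a_{n-2(\ell-1)}$ is undefined in that range. The statement itself still holds there; for instance, for $(n,\ell,t)=(6,5,2)$ one has $N_5(6,2)=5$, and the four nondegenerate sequences $101010$, $011010$, $010110$, $010101$ together with the extra sequence $010010$ all contain the five subsequences $0010$, $0110$, $0100$, $0101$, $1010$. So completing your sketched replacement family along these lines would repair an oversight in the published argument rather than merely match it.
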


\begin{proof}
Let $x_1=10a_{n-2}$, $x_2=01a_{n-2}$, $x_j=\underbrace{0101\cdots01}_{2(j-1)}a_{n-2(j-1)}$  for $j \in \{2,\cdots,\ell\}$ and let $\chi=\cap_{i=1}^{\ell}\mathcal{D}_t(x_i)$. Observe that 
$$(0 \circ \mathcal{D}_{(t-1)}(x_1^3,...,x_1^n)) \cap (0 \circ \mathcal{D}_{t}(x_2^2,...,x_2^n)) \cap \cdots \cap (0 \circ \mathcal{D}_{t}(x_{\ell}^2,...,x_{\ell}^n))=0 \circ \mathcal{D}_{(t-1)}(x_1^3,...,x_1^n)$$
and 
\begin{align*}
 (1 \circ \mathcal{D}_{t}(x_1^2,...,x_1^n)) \cap (1 \circ \mathcal{D}_{t-1}(x_2^3,...,x_2^n)) \cap \cdots \cap (1 \circ \mathcal{D}_{t-1}(x_{\ell}^3,...,x_{\ell}^n))\\
=(1 \circ \mathcal{D}_{t-1}(x_2^3,...,x_2^n)) \cap (1 \circ \mathcal{D}_{t-1}(x_3^3,...,x_3^n)) \cdots \cap (1 \circ \mathcal{D}_{t-1}(x_{\ell}^3,...,x_{\ell}^n))
\end{align*}

By Lemma 1 and Lemma 2, we have 
\begin{align*}
|\chi^{0}|& = |(0 \circ \mathcal{D}_{(t-1)}(x_1^3,...,x_1^n)) \cap (0 \circ \mathcal{D}_{t}(x_2^2,...,x_2^n)) \cap \cdots \cap (0 \circ \mathcal{D}_{t}(x_{\ell}^2,...,x_{\ell}^n))|\\
&=|(0 \circ \mathcal{D}_{t-1}(x_1^3,...,x_1^n))|\\
& = D(n-2,t-1)
\end{align*}
\begin{align*}
|\chi^{1}|& = |(1 \circ \mathcal{D}_{t}(x_1^2,...,x_1^n)) \cap (1 \circ \mathcal{D}_{t-1}(x_2^3,...,x_2^n)) \cap (1 \circ \mathcal{D}_{t-1}(x_3^3,...,x_3^n)) \cap \cdots \cap (1 \circ \mathcal{D}_{t-1}(x_{\ell}^3,...,x_{\ell}^n))|\\
&=|(1 \circ \mathcal{D}_{t-1}(x_2^3,...,x_2^n)) \cap (1 \circ \mathcal{D}_{t-1}(x_3^3,...,x_3^n)) \cdots \cap (1 \circ \mathcal{D}_{t-1}(x_{\ell}^3,...,x_{\ell}^n))|\\
& = N_{\ell-1}(n-2,t-1)
\end{align*}
Therefore,
\begin{align*}
|\chi|& =|\chi^{0}|+|\chi^{1}|\\
& = D(n-2,t-1)+N_{\ell-1}(n-2,t-1). \qedhere
\end{align*}
\end{proof}

Now we have the following recursive relations. 
$$N_2(n,t)=2D(n-2,t-1)$$
$$N_3(n,t)=D(n-2,t-1)+2D(n-4,t-2)$$
$$N_4(n,t)=D(n-2,t-1)+N_3(n-2,t-1)$$
...
$$N_{\ell}(n,t)=D(n-2,t-1)+N_{\ell-1}(n-2,t-1).$$

Therefore, 
$$N_{\ell}(n,t)=\sum_{i=1}^{\ell-2}D(n-2i,t-i)+2D(n-2(\ell-1),t-(\ell-1)).$$

\subsection{The Upper Bound}

 We will show that $N_{\ell}(n,t)$ is an upper bound for $N(n,\ell,t)$ by induction on $\ell$, $n$, $t$. Let us first address the base case. 

The base case is: 
$n=t+\ell-1$, $t \geq 1$, and $\ell \geq 3$ or

$t=1$, $\ell \geq 3$, and $n \geq t+\ell-1$ or 

$\ell=3$ , $t \geq 1$,  and $n \geq t+\ell-1$.

$n=t+\ell-1$, $t \geq 1$, and $\ell \geq 3$, it is easy to calculate that $N_{\ell}(t+\ell-1,t)=2^{\ell-1}$. After deleting $t$ symbols from each sequence of length $t+\ell-1$ there are only $\ell-1$ symbols left for each sequence. Since $N_{\ell}(n,t)=2^{\ell-1}$, $N(n,\ell,t) \leq N_{\ell}(n,t)$. 

For $t=1$, $\ell \geq 3$, and $n \geq t+\ell-1$, we have $N_\ell(n,1)=1$. Since $N(n,3,1)=1$, we have $N(n,\ell,1) \leq 1=N_{\ell}(n,1)$. 

For $\ell=3$ , $t \geq 1$,  and $n \geq t+\ell-1$, in the previous section it has been proven. 

Now we will move onto the induction step. Assume that $N(n_0,\ell_0,t_0) \leq N_{\ell_0}(n_0,t_0)$ is true for all $n_0 \geq t_0+2$, $\ell_0 \geq 3$, and $t_0\geq 1$ such that $n_0+t_0+\ell_0 < n+t+\ell$. We will need a few lemmas to complete this step.

Let us first show the following two lemmas which will be used in our analysis several times.
\begin{lemma} \label{lemma: D(x-1,y-1)}
For any integers $x,y$ and any $m\ge 2$,
$$ D(x-1,y-1)\ \le\ N_{m}(x,y).$$

\end{lemma}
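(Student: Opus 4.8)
The plan is to prove the inequality by induction on $m$, exploiting the fact that both sides satisfy parallel one-step recurrences. The two ingredients I will use are the deletion-ball recursion $D(n,t)=D(n-1,t)+D(n-2,t-1)$ recorded in Section~2 and the recursion $N_{m}(x,y)=D(x-2,y-1)+N_{m-1}(x-2,y-1)$ established just before the lemma. Applying the first at the point $(x-1,y-1)$ gives the key splitting
$$D(x-1,y-1)=D(x-2,y-1)+D(x-3,y-2),$$
so that after the shift $(x,y)\mapsto(x-2,y-1)$ the term $D(x-3,y-2)$ is exactly the quantity the induction hypothesis controls.

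For the base case $m=2$ I must show $D(x-1,y-1)\le N_2(x,y)=2D(x-2,y-1)$. By the splitting above this is equivalent to the monotonicity bound $D(x-3,y-2)\le D(x-2,y-1)$. I would obtain this from the closed form $D(n,t)=\sum_{i=0}^{t}\binom{n-t}{i}$ by isolating the top summand, which yields $D(n,t)=D(n-1,t-1)+\binom{n-t}{t}$ and hence $D(n-1,t-1)\le D(n,t)$; specializing to $(n,t)=(x-2,y-1)$ gives precisely the needed inequality.

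For the inductive step, assume the statement for $m-1\ge 2$ and apply it at the shifted argument $(x-2,y-1)$ to get $D(x-3,y-2)\le N_{m-1}(x-2,y-1)$. Then
\begin{align*}
D(x-1,y-1)&=D(x-2,y-1)+D(x-3,y-2)\\
&\le D(x-2,y-1)+N_{m-1}(x-2,y-1)=N_{m}(x,y),
\end{align*}
which completes the induction.

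The main point requiring care — and the step I expect to be the real obstacle — is the bookkeeping at the boundary of the integer ranges. The deletion-ball recursion $D(n,t)=D(n-1,t)+D(n-2,t-1)$ is valid only in the non-degenerate regime $0<t<n$ and breaks down at the extreme $t=n$, so the splitting of $D(x-1,y-1)$ is unavailable exactly when $x=y$. I would therefore first dispose of the degenerate ranges: when $x<y$ or $y\le 0$ the left side vanishes by the conventions $D(n,t)=0$ for $n<t$ or $t<0$, and since every term of $N_m(x,y)$ is non-negative the inequality is immediate; the remaining boundary points are then checked directly before running the induction above in the regime where both recurrences genuinely apply.
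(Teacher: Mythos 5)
Your proposal is correct and follows essentially the same route as the paper's own proof: induction on $m$, splitting $D(x-1,y-1)=D(x-2,y-1)+D(x-3,y-2)$ via the deletion-ball recursion, and closing the inductive step with $N_m(x,y)=D(x-2,y-1)+N_{m-1}(x-2,y-1)$. Your extra attention to the degenerate ranges (where the recursion for $D$ breaks down) and your explicit derivation of $D(n-1,t-1)\le D(n,t)$ from the closed form are refinements the paper leaves implicit, but the argument is the same.
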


\begin{proof}
We argue by induction on $m\ge 2$.

\medskip
For the base case let $m=2$. 
Here $N_2(x,y)=2D(x-2,y-1)$. 

We have 
$$
D(x-1,y-1)=D(x-2,y-1)+D(x-3,y-2).
$$
Moreover,
$$
D(x-3,y-2)\ \le\ D(x-2,y-1),
$$
Thus
$$
D(x-1,y-1)\ \le\ 2D(x-2,y-1)\ =\ N_2(x,y).
$$

For the induction step assume the claim holds for some $m-1\ge 2$, i.e.,
$$
D(u-1,v-1)\ \le\ N_{m-1}(u,v)\qquad\text{for all }u,v\in\mathbb{Z}.
$$
We show it for $m$. Using the same recursion,
$$
D(x-1,y-1)=D(x-2,y-1)+D(x-3,y-2).
$$
On the other hand, by the definition of $N_m$,
$$
N_m(x,y)=D(x-2,y-1)+N_{m-1}(x-2,y-1).
$$

Applying the induction hypothesis gives
$$
D(x-3,y-2)\ \le\ N_{m-1}(x-2,y-1).
$$
Therefore
$$
D(x-1,y-1)
\ \le\ D(x-2,y-1)+N_{m-1}(x-2,y-1)
\ =\ N_m(x,y),
$$
completing the induction.
\end{proof}

\begin{lemma} \label{lemma: N_m,N_L}
For any $m$ and any $L\ge m$,
$$
N_m(u-1,v-1)\ \le\ N_L(u,v).
$$
\end{lemma}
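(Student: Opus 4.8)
The plan is to induct on the smaller index $m$ (assuming $m\ge 2$, the range in which $N_m$ and Lemma~\ref{lemma: D(x-1,y-1)} are stated), proving the inequality simultaneously for every $L\ge m$ and all integer arguments $u,v$. The entire argument rests on one elementary monotonicity fact about $D$: since $D(a-1,b-1)$ and $D(a,b)$ have the same ``gap'' $(a-1)-(b-1)=a-b$, the sum formula gives
$$D(a-1,b-1)=\sum_{i=0}^{b-1}\binom{a-b}{i}\ \le\ \sum_{i=0}^{b}\binom{a-b}{i}=D(a,b),$$
so shifting both arguments of $D$ down by one can only decrease it (the boundary cases, where one invokes $D(n,n)=1$ or the vanishing of $D$ outside its range, are immediate). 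I will use this repeatedly.

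For the base case $m=2$ I have $N_2(u-1,v-1)=2D(u-3,v-2)$. When $L=2$ the claim reduces to $2D(u-3,v-2)\le 2D(u-2,v-1)$, which is the monotonicity fact with $a=u-2$, $b=v-1$. When $L\ge 3$ I expand $N_L(u,v)=D(u-2,v-1)+N_{L-1}(u-2,v-1)$ via the recursion for $N_\ell$; one copy of $D(u-3,v-2)$ is bounded by $D(u-2,v-1)$ by monotonicity, and the other copy, being $D((u-2)-1,(v-1)-1)$, is bounded by $N_{L-1}(u-2,v-1)$ through Lemma~\ref{lemma: D(x-1,y-1)} (applicable since $L-1\ge 2$). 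Adding the two bounds yields $N_2(u-1,v-1)\le N_L(u,v)$.

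For the induction step, assume the statement at index $m-1$ for all $L'\ge m-1$, and let $L\ge m\ge 3$. Applying the recursion on both sides, evaluated at arguments differing by the same shift, gives
$$N_m(u-1,v-1)=D(u-3,v-2)+N_{m-1}(u-3,v-2),\qquad N_L(u,v)=D(u-2,v-1)+N_{L-1}(u-2,v-1).$$
The leading $D$-terms compare by monotonicity, $D(u-3,v-2)\le D(u-2,v-1)$. For the residual $N$-terms I note $(u-3,v-2)=((u-2)-1,(v-1)-1)$, so that $N_{m-1}(u-3,v-2)\le N_{L-1}(u-2,v-1)$ is precisely the induction hypothesis at index $m-1$, applied at the point $(u-2,v-1)$ with the larger index $L-1\ge m-1$. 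Summing the two comparisons closes the induction.

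The only genuine subtlety—and the step I would verify most carefully—is that the two recursions must be unwound at arguments related by the same unit shift, so that the leftover $N$-terms line up as a bona fide instance of the lemma one level down; this is exactly what allows a single induction on $m$ to absorb both the change of index from $m$ to $L$ and the coordinate shift at once. Everything else is the routine monotonicity of $D$ together with two invocations of already-established facts, namely the recursion $N_\ell(n,t)=D(n-2,t-1)+N_{\ell-1}(n-2,t-1)$ and Lemma~\ref{lemma: D(x-1,y-1)}.
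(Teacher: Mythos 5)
Your proof is correct, but it takes a genuinely different route from the paper's --- and the difference matters, because the paper's own argument contains a flawed step that your approach avoids. The paper proves the lemma by a purely termwise comparison: for $1\le i\le m-2$ it matches $D(u-1-2i,v-1-i)$ against $D(u-2i,v-i)$ exactly as you do, but it then claims the trailing term $2D(u-1-2(m-1),v-1-(m-1))$ is at most the \emph{single} term $D(u-2(m-1),v-(m-1))$ of $N_L(u,v)$. Writing $n=u-2(m-1)$ and $t=v-(m-1)$, that claim reads $2D(n-1,t-1)\le D(n,t)$, which is false in general: for $n=5$, $t=3$ one has $2D(4,2)=8>4=D(5,3)$. (Indeed the paper's displayed chain even asserts $D(n+1,t)\le D(n,t)$, the reverse of the actual monotonicity of $D$ in its first argument.) Your induction on $m$ sidesteps this: by unwinding the recursion $N_\ell(n,t)=D(n-2,t-1)+N_{\ell-1}(n-2,t-1)$ on both sides at arguments related by the same unit shift, and invoking Lemma~\ref{lemma: D(x-1,y-1)} in the base case $m=2$, $L\ge 3$ to absorb the second copy of $D(u-3,v-2)$ into $N_{L-1}(u-2,v-1)$, you spread the doubled term over the whole tail of $N_L(u,v)$ rather than pinning it against one summand. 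In the problematic example this is exactly what saves the inequality: $2D(4,2)=8\le D(5,3)+N_2(5,3)=4+4$. So your argument (which structurally mirrors the paper's proof of Lemma~\ref{lemma: D(x-1,y-1)} itself) is not merely an alternative to the paper's proof; it is a correct repair of it, and the lemma as stated does hold.
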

\begin{proof}
Let $L=m$. Since $D(n,t) \geq D(n-1,t-1)$, we have 
$$
N_m(u-1,v-1) \leq N_L(u,v). 
$$
Now assume that $L>m$.  We will bound $N_m(u-1,v-1)$ from above termwise.

For $1\le i\le m-2$, we have:
$$
D(u-1-2i,\,v-1-i)\ \le\ D(u-2i,\,v-i).
$$
Since $L> m$, the indices $i=1,\dots,m-2$ all lie in the index
range $1\le i\le L-2$ of $N_L(u,v)$. Thus each of these terms is dominated by
a corresponding term of $N_L(u,v)$.

Let $a=u-1-2(m-1)$ and $b=v-1-(m-1)$. Then
$$
2D(a,b)=2D\bigl(u-1-2(m-1),\,v-1-(m-1)\bigr).
$$

If $L>m$, then $m-1\le L-2$. We have 
$$
2D(a,b)
\le D\bigl(u-2(m-1)+1,\,v-(m-1)\bigr)
\le D\bigl(u-2(m-1),\,v-(m-1)\bigr).
$$

\medskip
\noindent
Therefore, we have 
$$
N_m(u-1,v-1)\ \le\ N_L(u,v). \qedhere
$$
\end{proof}

Now let us assume that $n \geq t+\ell-1$, $t \geq 1$ and $\ell \geq 4$. 

\begin{lemma} \label{lemma: l_1 10,l_2 01}
Suppose that among the vectors $x_1, x_2, \dots, x_{\ell}$, 
$\ell_1$ of them begin with $10$ and $\ell_2$ of them begin with $01$, 
where $\ell_1 \geq 2$ and $\ell_2 \geq 2$. Then
$$|\mathcal{D}_t(\mathbf{x}_1) \cap \mathcal{D}_t(\mathbf{x}_2) \cap ... \cap \mathcal{D}_t(\mathbf{x}_{\ell}) |\leq N_{\ell}(n,t).$$
\end{lemma}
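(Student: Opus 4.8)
The plan is to split the common intersection by the leading symbol of the channel output and to treat the two halves separately, exactly as in the $\ell=3$ case. Write $\chi=\bigcap_{i=1}^{\ell}\mathcal{D}_t(x_i)$ and $\chi=\chi^{0}\cup\chi^{1}$, where $\chi^{b}$ collects the outputs beginning with $b$. For $\chi^{0}$ I would discard every centre except the $\ell_1$ that begin with $10$. If $x=10y$ with $y\in\mathbb{F}_2^{n-2}$, then Lemma~\ref{lemma 1} (first $0$ at position $2$) gives $\mathcal{D}_t(x)^{0}=0\circ\mathcal{D}_{t-1}(y)$; the tails $y_i$ are pairwise distinct because the $x_i$ are distinct and share the prefix $10$, so
$$|\chi^{0}|\ \le\ \Bigl|\bigcap_{i}\mathcal{D}_{t-1}(y_i)\Bigr|\ \le\ N(n-2,\ell_1,t-1)\ \le\ N_{\ell_1}(n-2,t-1),$$
the last step being the induction hypothesis when $\ell_1\ge 3$ and Levenshtein's two-ball bound $2D(n-4,t-2)=N_2(n-2,t-1)$ when $\ell_1=2$. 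Symmetrically, the $\ell_2$ centres beginning with $01$ (for which $\mathcal{D}_t(x)^{1}=1\circ\mathcal{D}_{t-1}(z)$) yield $|\chi^{1}|\le N_{\ell_2}(n-2,t-1)$. I would check admissibility of the induction: the parameter sum drops, since $(n-2)+(t-1)+\ell_i=n+t+\ell_i-3<n+t+\ell$, and $n-2\ge(t-1)+\ell_i-1$ follows from $n\ge t+\ell-1$ together with $\ell_i\le\ell-2$ (a consequence of $\ell_1,\ell_2\ge 2$).

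It remains to combine the halves. In the situation this lemma is meant for, every centre begins with $10$ or $01$, so $\ell_1+\ell_2=\ell$ (centres beginning $00$ or $11$ are disposed of separately, as in the $\ell=3,4$ analyses). Using the recursion $N_{\ell}(n,t)=D(n-2,t-1)+N_{\ell-1}(n-2,t-1)$, the target $|\chi^{0}|+|\chi^{1}|\le N_{\ell}(n,t)$ reduces to the purely numerical inequality
$$N_{a}(m,s)+N_{b}(m,s)\ \le\ D(m,s)+N_{a+b-1}(m,s),\qquad (m,s):=(n-2,t-1),\ a:=\ell_1,\ b:=\ell_2.$$
This is the only real content, and it is where I expect to spend the effort. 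I would set $d_i:=D(m-2i,\,s-i)$, so that $N_k(m,s)=\sum_{i=1}^{k-2}d_i+2d_{k-1}$, and compute the successive differences $N_{k+1}(m,s)-N_k(m,s)=2d_k-d_{k-1}=:\phi_k$. A telescoping calculation gives $N_a(m,s)-D(m,s)=\sum_{j=1}^{a-1}\phi_j$ and $N_{a+b-1}(m,s)-N_b(m,s)=\sum_{j=1}^{a-1}\phi_{\,b-1+j}$, so the inequality is exactly $\sum_{j=1}^{a-1}\phi_j\le\sum_{j=1}^{a-1}\phi_{\,b-1+j}$: both sides are blocks of $a-1$ consecutive terms, the second block shifted right by $b-1\ge 1$.

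Hence everything reduces to showing that $\phi_k$ is non-decreasing in $k$. A short computation using $D(N,T)=D(N-1,T)+D(N-2,T-1)$ together with the identity $D(N,T)-D(N-1,T-1)=\binom{N-T}{T}$ yields
$$\phi_{k+1}-\phi_k\ =\ D(m-2k,\,s-k+1)-D(m-2k-1,\,s-k)\ =\ \binom{m-s-k-1}{\,s-k+1\,}\ \ge\ 0,$$
i.e. the discrete convexity of the sequence $(d_i)$; since $b\ge 2$ this gives $\phi_{\,b-1+j}\ge\phi_j$ term-by-term and closes the estimate. The main obstacle is precisely this monotonicity of $\phi_k$: it is what forces the hypotheses $\ell_1,\ell_2\ge 2$ (so each block has nonnegative length $a-1$ and starts no earlier than $\phi_1$), and it must be extracted from the exact binomial identity rather than from the cruder bound $D(N,T)\ge D(N-1,T-1)$ that underlies Lemmas~\ref{lemma: D(x-1,y-1)} and~\ref{lemma: N_m,N_L}.
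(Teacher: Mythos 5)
Your proof is correct, and its first half is identical to the paper's: split $\chi$ by the leading bit of the output, keep only the $\ell_1$ centres beginning with $10$ (resp.\ the $\ell_2$ beginning with $01$), and get $|\chi^0|\le N_{\ell_1}(n-2,t-1)$ and $|\chi^1|\le N_{\ell_2}(n-2,t-1)$ from the induction hypothesis (Levenshtein's two-ball bound when $\ell_i=2$), so that with $\ell_1+\ell_2=\ell$ everything rests on the numerical inequality
$$N_{\ell_1}(n-2,t-1)+N_{\ell_2}(n-2,t-1)\ \le\ N_{\ell}(n,t)\ =\ D(n-2,t-1)+N_{\ell-1}(n-2,t-1).$$
Where you genuinely diverge is in the proof of this inequality, and your route is not merely different: it avoids an actual defect in the paper's argument. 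The paper degrades every term of $N_{\ell}(n,t)$ via $D(N,T)\ge 2D(N-2,T-1)$ and then claims that each term of the left-hand side occurs among the degraded summands with coefficient at most $2$. That bookkeeping fails: already for $\ell_1=\ell_2=2$, $\ell=4$, the left-hand side is $4D(n-4,t-2)$ (one term with coefficient $4$), and the comparison the paper needs, $4D(n-4,t-2)\le 2D(n-4,t-2)+2D(n-6,t-3)+2D(n-8,t-4)$, is generally false: for $t=2$ it reads $4\le 2$, and for $t=3$, $n=6$ it reads $8\le 6$. The true statement in this case is $3D(n-4,t-2)\le D(n-2,t-1)+2D(n-6,t-3)$; applying the crude bound to $D(n-2,t-1)$ leaves one needing $D(n-4,t-2)\le 2D(n-6,t-3)$, which is the \emph{reverse} of that same bound and false in general. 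The slack in the true statement is exactly $\binom{n-t-3}{t-1}$ --- precisely your $\phi_2-\phi_1$ --- so the exact identity $D(N,T)-D(N-1,T-1)=\binom{N-T}{T}$ that you invoke is genuinely needed. I checked your telescoping scheme ($\phi_k=2d_k-d_{k-1}$, the two sides of the target written as blocks of $a-1$ consecutive $\phi$'s with the right block shifted by $b-1\ge 1$) and the monotonicity computation $\phi_{k+1}-\phi_k=\binom{m-s-k-1}{s-k+1}\ge 0$; both are correct, and the degenerate indices (negative radius, $k$ near $s$) cause no trouble in the range $k\le \ell-3$ forced by $n\ge t+\ell-1$. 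In short, your proof establishes the lemma where the paper's own proof of the key numerical step does not.

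Two small remarks. First, your explicit restriction to $\ell_1+\ell_2=\ell$ is the right reading: the paper's numerical step assumes it silently, while the lemma statement tolerates leftover centres beginning $00$ or $11$ (those configurations are routed to the other lemmas). Second, at $t=1$ the bounds $|\chi^0|\le N_{\ell_1}(n-2,0)=0$ and $|\chi^1|\le N_{\ell_2}(n-2,0)=0$ should be read directly from the disjointness of $0$-deletion balls of distinct centres rather than from the induction hypothesis (which requires $t_0\ge 1$) --- a boundary case the paper glosses over as well.
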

\begin{proof}
Let $\chi= \cap_{i=1}^{\ell} \mathcal{D}_{t}(x_i)$ and then 
$\chi^0=0 \circ \cap_{i=1}^{\ell_1}\mathcal{D}_{t-1}(x_i^3x_i^4\cdots x_i^n)$.
$$|\chi^0| \leq \max_{y_i \in \{0,1\}^{n-2}} \cap_{i=1}^{\ell_1} \mathcal{D}_{t-1}(y_i) \leq N_{\ell_1}(n-2,t-1).$$

Similarly, we have 
$$|\chi^1| \leq \max_{y_i \in \{0,1\}^{n-2}} \cap_{i=1}^{\ell_2} \mathcal{D}_{t-1}(y_i) \leq N_{\ell_2}(n-2,t-1).$$

Let $\ell,\ell_1,\ell_2 \ge 2$ with $\ell_1+\ell_2=\ell$ and $\ell\ge 4$. We will show that 
$$
N_{\ell}(n,t) \ge N_{\ell_1}(n-2,t-1)+N_{\ell_2}(n-2,t-1).
$$

We first note the identity
\begin{equation}\label{eq:D-recursion}
D(n,t)=D(n-1,t) + D(n-2,t-1).
\end{equation}
Since
$$
D(n-1,t)-D(n-2,t-1)
=\binom{n-t-1}{t}\ge 0,
$$
we obtain from \eqref{eq:D-recursion} that
\begin{equation}\label{eq:D-ineq}
D(n,t)\ge 2D(n-2,t-1).
\end{equation}

Applying \eqref{eq:D-ineq} to each term in the definition of $N_\ell(n,t)$ gives
\begin{align}
N_{\ell}(n,t)
&=\sum_{i=1}^{\ell-2} D(n-2i,t-i)
\;+\;2D(n-2(\ell-1),t-(\ell-1))\\
&\ge 2\sum_{i=1}^{\ell-2} D(n-2-2i,t-1-i)
\;+\;2D(n-2-2(\ell-1),t-\ell).\label{eq:N-ineq}
\end{align}

Now we expand $N_{\ell_1}(n-2,t-1)$ and $N_{\ell_2}(n-2,t-1)$:
\begin{align*}
N_{\ell_1}(n-2,t-1)
&=\sum_{i=1}^{\ell_1-2} D(n-2-2i,t-1-i)
\;+\;2D(n-2-2(\ell_1-1),t-\ell_1),\\
N_{\ell_2}(n-2,t-1)
&=\sum_{j=1}^{\ell_2-2} D(n-2-2j,t-1-j)
\;+\;2D(n-2-2(\ell_2-1),t-\ell_2).
\end{align*}

Summing, we obtain
\begin{align}
&N_{\ell_1}(n-2,t-1)+N_{\ell_2}(n-2,t-1)\\
&=\sum_{i=1}^{\ell_1-2} D(n-2-2i,t-1-i)
+\sum_{j=1}^{\ell_2-2} D(n-2-2j,t-1-j)\\
&+2D(n-2-2(\ell_1-1),t-\ell_1)
+2D(n-2-2(\ell_2-1),t-\ell_2).\label{eq:sum-l1l2}
\end{align}

 Also we have 
$$D(n-2(\ell_1-1),\,t-(\ell_1-1)) \geq 2D(n-2-2(\ell_1-1),\,t-\ell_1), $$

and 
$$D(n-2(\ell_2-1),\,t-(\ell_2-1)) \geq  2D(n-2-2(\ell_2-1),\,t-\ell_2).$$

Since $\ell_1-1,\ell_2-1\le \ell-2$ when $\ell_1,\ell_2\ge 2$, every term in \eqref{eq:sum-l1l2} appears among the summands in \eqref{eq:N-ineq} with coefficient at most $2$. The additional term $2D(n-2-2(\ell-1),t-\ell)\ge 0 $ in \eqref{eq:N-ineq} only increases the right-hand side of that inequality.

Therefore, comparing \eqref{eq:N-ineq} and \eqref{eq:sum-l1l2} yields
$$
N_{\ell}(n,t)\ge N_{\ell_1}(n-2,t-1)+N_{\ell_2}(n-2,t-1),$$
which completes the proof.
\end{proof}

\begin{lemma} \label{lemma:x_1=x_2=a}
 Let $x_1$,$x_2$,$\cdots$ and $x_{\ell}$ be $\ell$ arbitrary sequences in $\mathbb{F}_2^n$ such that $x_i \neq x_j$ for $i \neq j$ and $a=x_1^1=x_2^1=\cdots =x_{\ell}^1$. Then 
$$|\cap_{i}^{\ell}\mathcal{D}_t(x_i) |\leq N_{\ell}(n,t).$$
\end{lemma}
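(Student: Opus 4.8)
The plan is to mimic the proof of Lemma~\ref{lemma: x_1^1,x_2^2,x_3^1} (the $\ell=3$ case), replacing $3$ by $\ell$ throughout and invoking the induction hypothesis on $n+t+\ell$. The engine of the argument is the additive recurrence
$$N_\ell(n,t)=N_\ell(n-1,t)+N_\ell(n-2,t-1),$$
which I would first record; it follows by applying $D(m,s)=D(m-1,s)+D(m-2,s-1)$ to every term of $N_\ell(n,t)$ and regrouping, the first summand of each split rebuilding $N_\ell(n-1,t)$ and the second rebuilding $N_\ell(n-2,t-1)$. Writing $\chi=\cap_{i=1}^{\ell}\mathcal{D}_t(x_i)$ and splitting every output of $\chi$ according to whether its first symbol is $a$ or $\bar a$, it then suffices to prove $|\chi^a|\le N_\ell(n-1,t)$ and $|\chi^{\bar a}|\le N_\ell(n-2,t-1)$ and add. (We may assume $n\ge t+\ell$, since $n=t+\ell-1$ is part of the base case.)

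For $\chi^a$, since $x_i^1=a$ for all $i$, Lemma~\ref{lemma 1} with prefix $a$ (so $k=1$, $t^\ast=t$) gives $\chi^a=a\circ\bigcap_{i=1}^{\ell}\mathcal{D}_t(x_i^2\cdots x_i^n)$. Because the $x_i$ are distinct and share their first symbol, the tails $x_i^2\cdots x_i^n$ are $\ell$ distinct sequences in $\mathbb{F}_2^{n-1}$, whence $|\chi^a|\le N(n-1,\ell,t)\le N_\ell(n-1,t)$ by the induction hypothesis (legitimate since $(n-1)+t+\ell<n+t+\ell$ and $n-1\ge t+\ell-1$).

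The main work, and the step I expect to be the crux, is bounding $\chi^{\bar a}$. I would reorder the sequences so that $x_1$ contains the earliest first occurrence of $\bar a$, say at position $k$; since $x_1^1=a$ we have $k\ge 2$. If every $x_i$ also has $x_i^k=\bar a$, then all $x_i$ agree on their first $k$ coordinates $\underbrace{a\cdots a}_{k-1}\bar a$, so the suffixes $x_i^{k+1}\cdots x_i^n$ are again $\ell$ distinct strings; Lemma~\ref{lemma 1} peels the prefix to give $|\chi^{\bar a}|\le N_\ell(n-k,t-k+1)$ by induction, and $N_\ell(n-k,t-k+1)\le N_\ell(n-2,t-1)$ follows from $k\ge 2$ by iterating the monotonicity $N_\ell(u-1,v-1)\le N_\ell(u,v)$ (Lemma~\ref{lemma: N_m,N_L} with $L=m=\ell$). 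Otherwise some $x_j$ has $x_j^k=a$, so its first $\bar a$ sits at a position $\ge k+1\ge 3$; discarding all constraints except $x_j$ and applying Lemma~\ref{lemma 1} yields the single-ball bound $|\chi^{\bar a}|\le D(n-3,t-2)$, and $D(n-3,t-2)=D\bigl((n-2)-1,(t-1)-1\bigr)\le N_\ell(n-2,t-1)$ by Lemma~\ref{lemma: D(x-1,y-1)}. Either way $|\chi^{\bar a}|\le N_\ell(n-2,t-1)$, and adding the two estimates gives $|\chi|\le N_\ell(n-1,t)+N_\ell(n-2,t-1)=N_\ell(n,t)$.

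Beyond the bookkeeping, the only points requiring care are the degenerate radii: when $t-k+1\le 0$ the relevant deletion balls are empty or singletons, so the intersection of the distinct suffix-balls is empty and $|\chi^{\bar a}|=0$, trivially within the bound; and one should confirm the index constraints $n_0\ge t_0+\ell_0-1$ needed to invoke the induction hypothesis in each subcase, which hold precisely because $k\ge 2$ and $n\ge t+\ell$. I do not foresee a genuine obstacle here: the difficulty is entirely in matching each peeled intersection to the correct instance of the two numerical lemmas, exactly as in the $\ell=3$ case.
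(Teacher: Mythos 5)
Your proposal is correct and follows essentially the same route as the paper's proof: split $\chi$ by first symbol, peel prefixes with Lemma~\ref{lemma 1}, bound $|\chi^{a}|$ by $N_{\ell}(n-1,t)$ via the induction hypothesis, and bound $|\chi^{\bar a}|$ by $N_{\ell}(n-2,t-1)$ through the same two-case analysis on whether every $x_i$ has $\bar a$ at position $k$, finishing with $N_{\ell}(n-1,t)+N_{\ell}(n-2,t-1)=N_{\ell}(n,t)$. The only difference is cosmetic: where you cite Lemma~\ref{lemma: D(x-1,y-1)} to get $D(n-3,t-2)\le N_{\ell}(n-2,t-1)$, the paper re-derives that inequality inline via a telescoping expansion of the recurrence $D(n,t)=D(n-1,t)+D(n-2,t-1)$; your shortcut, together with your explicit treatment of degenerate radii and of the index constraints needed to invoke the induction hypothesis, is if anything tidier.
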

\begin{proof}
    Our proof follows the proof of Theorem 8 in \cite{Gabrys}. Let $\chi=\mathcal{D}_t(x_1) \cap \mathcal{D}_t(x_2) \cap \cdots \cap \mathcal{D}_t(x_{\ell})$. We have 
\begin{align*}
|\chi^{a}|& =|a \circ \mathcal{D}_{t}(x_1^2,...,x_1^n)\cap \mathcal{D}_{t}(x_2^2,...,x_2^n) \cap \cdots \cap \mathcal{D}_{t}(x_{\ell}^2,...,x_{\ell}^n)|\\
&\leq \max_{x_1 \neq x_2 \neq \cdots x_{\ell}; x_1,x_2,\cdots, x_{\ell} \in \mathbb{F}_2^{n-1}}|\mathcal{D}_t(x_1) \cap \mathcal{D}_{t}(x_2)\cap \cdots \cap \mathcal{D}_{t}(x_{\ell})|\\
& \leq N_{\ell}(n-1,t)
\end{align*}

Suppose $x_1^k$ is the first occurrence of the symbol $\bar{a}$ in $x_1$ and the symbol $\bar{a}$ appears in $x_1$ not after it appears in $x_2$, $x_3$, $\cdots$ and $x_\ell$.

If $x_1^k=x_2^k=\cdots =x_{\ell}^k=\bar{a}$ we have 
\begin{align*}
|\chi^{\bar{a}}|& =|\bar{a} \circ \mathcal{D}_{t-(k-1)}(x_1^{k+1},...x_1^n)\cap \mathcal{D}_{t-(k-1)}(x_2^{k+1},...,x_2^n) \cap \cdots \cap \mathcal{D}_{t-(k-1)}(x_{\ell}^{k+1},...,x_{\ell}^n)|\\
&\leq \max_{\mathbf{x_1} \neq \mathbf{x_2} \cdots \neq \mathbf{x_{\ell}};x_1, x_2,\cdots,x_{\ell}\in \mathbb{F}_2^{n-k}}|\mathcal{D}_{t-(k-1)}(x_1) \cap \mathcal{D}_{t-(k-1)}(x_2)\cap \cdots \cap \mathcal{D}_{t-(k-1)}(x_{\ell})|\\
& \leq N_{\ell}(n-k,t-k+1)\\
& \leq N_{\ell}(n-2,t-1)
\end{align*}
If one of $x_2^k$, $x_3^k$ $\cdots$ or $x_{\ell}^k$ is equal to $a$, say $x_2^k=a$ then we have 
\begin{align}
|\chi^{\bar{a}}|& \leq |\bar{a} \circ  \mathcal{D}_{t-k}(x_2^{k+2},...x_2^n)|\\
&\leq D(n-k-1,t-k)\\
& \leq D(n-3,t-2)\\
& \leq N_{\ell}(n-2,t-1) \label{eq: D(n-3,t-2) < N}
\end{align}
Let us show \eqref{eq: D(n-3,t-2) < N}.

By using \eqref{eq:D-recursion}, we prove by induction that for any integer $k\ge1$,
\begin{equation}\label{eq:telescope}
D(n-3,t-2)
=
\sum_{i=1}^k D(n-2-2i,\ t-1-i)
+
D(n-(2k+3),\ t-(k+2)).
\end{equation}

Let $k=\ell-2$. Then \eqref{eq:telescope} becomes
\begin{equation}\label{eq:k=l-2}
D(n-3,t-2)
=
\sum_{i=1}^{\ell-2} D(n-2-2i,\ t-1-i)
+
D\bigl(n-(2\ell-1),\ t-\ell).
\end{equation}

Note that from \begin{equation}
D(n,t)=D(n-1,t) + D(n-2,t-1),
\end{equation} 
and 
$$
D(n-1,t)-D(n-2,t-1)
=\binom{n-t-1}{t}\ge 0,$$

we have $D(n,t) \leq 2D(n-1,t)$. Thus  
$$
D(n-(2\ell-1),\,t-\ell)
\le
2\,D(n-2\ell,\,t-\ell)
=
2\,D(n-2-2(\ell-1),\,t-\ell).
$$

Substituting this estimate into \eqref{eq:k=l-2} yields
$$
D(n-3,t-2)
\le
\sum_{i=1}^{\ell-2} D(n-2-2i,\ t-1-i)
+
2\,D(n-2-2(\ell-1),\,t-\ell)
=
N_{\ell}(n-2,t-1).$$

Since $|\chi^a|+|\chi^{\bar{a}}|\leq N_{\ell}(n-1,t)+N_{\ell}(n-2,t-1)=N_{\ell}(n,t)$, the proof is completed.
\end{proof}

\begin{lemma}
Suppose that among the vectors $x_1, x_2, \dots, x_{\ell}$, 
$\ell_1 \geq 1$ of them begin with $11$ and $\ell_3 \geq 1$ of them begin with $00$. Then
$$|\mathcal{D}_t(x_1) \cap \mathcal{D}_t(x_2) \cap ... \cap \mathcal{D}_t(x_{\ell}) |\leq N_{\ell}(n,t).$$
\end{lemma}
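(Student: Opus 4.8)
The plan is to split the common intersection by the first symbol of the subsequence, exactly as in the earlier lemmas, but now I only need to retain one vector from each of the two prescribed groups. Write $\chi=\bigcap_{i=1}^{\ell}\mathcal{D}_t(x_i)$ and $\chi=\chi^0\cup\chi^1$. Fix one vector $x_p$ beginning with $11$ and one vector $x_q$ beginning with $00$; these exist since $\ell_1\ge 1$ and $\ell_3\ge 1$. Because $\chi^0\subseteq \mathcal{D}_t(x_p)^0$ and $\chi^1\subseteq \mathcal{D}_t(x_q)^1$, it suffices to bound these two single-ball prefix sets, and the remaining $\ell-2$ vectors only shrink $\chi$ and can be discarded for an upper bound.

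For $\chi^0$, let $k$ be the position of the first $0$ in $x_p$. Since $x_p$ starts with $11$ we have $k\ge 3$, and Lemma~\ref{lemma 1} gives $\mathcal{D}_t(x_p)^0 = 0\circ \mathcal{D}_{t-(k-1)}(x_p^{k+1},\dots,x_p^n)$, so $|\chi^0|\le D(n-k,\,t-k+1)$. Using the monotonicity $D(m-1,s-1)\le D(m,s)$, which follows from $D(m,s)=D(m-1,s-1)+\binom{m-s}{s}$, iterated $k-3$ times, this is at most $D(n-3,t-2)$. Symmetrically, taking the first $1$ in $x_q$, which occurs at a position $\ge 3$, yields $|\chi^1|\le D(n-3,t-2)$. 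The only degenerate cases are $x_p=1^n$ or $x_q=0^n$, where the corresponding prefix set is empty and the bound holds trivially. Hence $|\chi|=|\chi^0|+|\chi^1|\le 2D(n-3,t-2)$.

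It then remains to verify $2D(n-3,t-2)\le N_{\ell}(n,t)$. Writing $2D(n-3,t-2)\le D(n-2,t-1)+D(n-3,t-2)$ by the same monotonicity, I would invoke Lemma~\ref{lemma: D(x-1,y-1)} with $x=n-2$, $y=t-1$, and $m=\ell-1\ge 2$ to obtain $D(n-3,t-2)\le N_{\ell-1}(n-2,t-1)$, and then combine this with the recursion $N_{\ell}(n,t)=D(n-2,t-1)+N_{\ell-1}(n-2,t-1)$ established in the lower-bound section. The two prefix bounds are entirely routine once Lemma~\ref{lemma 1} is applied; the only step demanding a little care is this final numerical inequality, which is precisely what Lemma~\ref{lemma: D(x-1,y-1)} and the $N_\ell$-recursion are designed to handle, so I expect no genuine obstacle.
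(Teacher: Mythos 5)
Your proof is correct, but it takes a genuinely different and noticeably leaner route than the paper. The paper's proof keeps \emph{all} vectors in each of the two groups and splits into three cases ($\ell_1,\ell_3\ge 2$; $\ell_1=1,\ell_3\ge 2$; $\ell_1=\ell_3=1$), obtaining the case-dependent bounds $N_{\ell_1}(n-3,t-2)+N_{\ell_3}(n-3,t-2)$, $D(n-3,t-2)+N_{\ell_3}(n-3,t-2)$, and $2D(n-3,t-2)$, and then closes each case with a different comparison (the inequality from the proof of Lemma~\ref{lemma: l_1 10,l_2 01}, Lemma~\ref{lemma: N_m,N_L}, and Lemma~\ref{lemma: D(x-1,y-1)}, respectively). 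You instead discard all but one vector from each group, which collapses every case into the single bound $|\chi|\le 2D(n-3,t-2)$, and you close with exactly the ingredients needed: the monotonicity $D(m-1,s-1)\le D(m,s)$, Lemma~\ref{lemma: D(x-1,y-1)} with $m=\ell-1$ (valid since this section assumes $\ell\ge 4$, so $\ell-1\ge 2$), and the recursion $N_{\ell}(n,t)=D(n-2,t-1)+N_{\ell-1}(n-2,t-1)$; your handling of the degenerate all-ones/all-zeros vectors and of Lemma~\ref{lemma 1} is also accurate. What the paper's finer case analysis buys is sharper intermediate estimates (retaining $N_{\ell_1}$, $N_{\ell_3}$ rather than single-ball bounds), but since the lemma only asserts the bound $N_{\ell}(n,t)$, that extra precision is not needed, and your uniform argument — essentially promoting the paper's own Case 3 computation to cover all cases — proves the same statement with fewer moving parts.
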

\begin{proof}
Assume that the first two coordinates of $x_1$, $x_2$,$\cdots$, $x_{\ell_1}$ are $11$ and the first two coordinates of $x_{\ell_1+1}$, $x_{\ell_1+2}$, $\cdots$, $x_{\ell_1+\ell_3}$ are $00$. Let $\chi= \cap_{i=1}^{\ell} \mathcal{D}_{t}(x_i)$

Case 1: $\ell_1 \geq 2$ and $\ell_3 \geq 2$.

Then we have 
$\chi^0 \subset 0 \circ \max_{y_i: y_i \in \{0,1\}^{n-3}}\cap_{i=1}^{\ell_1}\mathcal{D}_{t-2}(y_i)$.
$$|\chi^0| \leq \max_{y_i \in \{0,1\}^{n-3}, y_i \neq y_j} \cap_{i=1}^{\ell_1} \mathcal{D}_{t-2}(y_i) \leq N_{\ell_1}(n-3,t-2).$$
Similarly, we have 
$\chi^1 \subset 1 \circ \max_{y_i: y_i \in \{0,1\}^{n-3}}\cap_{i=1}^{\ell_3}\mathcal{D}_{t-2}(y_i)$.
$$|\chi^0| \leq \max_{y_i \in \{0,1\}^{n-3}, y_i \neq y_j} \cap_{i=1}^{\ell_3} \mathcal{D}_{t-2}(y_i) \leq N_{\ell_3}(n-3,t-2).$$

By the proof of Lemma ~\ref{lemma: l_1 10,l_2 01}, we have 
 $$N_{\ell_1}(n-2,t-1)+N_{\ell_3}(n-2,t-1)
 \le\ N_{\ell}(n,t). $$
Therefore, 
\[
N_{\ell_1}(n-3,t-2)+N_{\ell_3}(n-3,t-2)
\ \le\ N_{\ell_1}(n-2,t-1)+N_{\ell_3}(n-2,t-1)
\ \le\ N_{\ell}(n,t).
\]

Case 2: $\ell_1=1$ and $\ell_3 \geq 2$.
$\chi^0 \subset 0 \circ \max_{y: y \in \{0,1\}^{n-3}}\mathcal{D}_{t-2}(y)$.
$$|\chi^0| \leq \max_{y_i \in \{0,1\}^{n-3}}  \mathcal{D}_{t-2}(y_i) \leq D(n-3,t-2).$$

Similarly, we have 
$\chi^1 \subset 1 \circ \max_{y_i: y_i \in \{0,1\}^{n-3}}\cap_{i=1}^{\ell_3}\mathcal{D}_{t-2}(y_i)$.
$$|\chi^0| \leq \max_{y_i \in \{0,1\}^{n-3}, y_i \neq y_j} \cap_{i=1}^{\ell_3} \mathcal{D}_{t-2}(y_i) \leq N_{\ell_3}(n-3,t-2).$$

We claim that  $D(n-3,t-2)+ N_{\ell_3}(n-3,t-2) \leq N_{\ell}(n,t) $. Let us prove this claim. 

We have 
\begin{equation*}
N_{\ell}(n,t)
= D(n-2,t-1)\;+\;N_{\ell-1}(n-2,t-1).
\end{equation*}
and 
\[
D(n-3,t-2)\ \le\ D(n-2,t-1).
\]

Apply Lemma ~\ref{lemma: N_m,N_L} with $m=\ell_3$, $L=\ell-1$, and $(u,v)=(n-2,t-1)$ to get
\[
N_{\ell_3}(n-3,t-2)\ \le\ N_{\ell-1}(n-2,t-1),
\]
since $\ell_3\le \ell-1$ by hypothesis.

Therefore,
\[
D(n-3,t-2)+N_{\ell_3}(n-3,t-2)
\ \le\ D(n-2,t-1)+N_{\ell-1}(n-2,t-1)
\ =\ N_{\ell}(n,t).
\]

Case 3: $\ell_1=1$ and $\ell_3=1$.
We have $\chi^0 \subset 0 \circ \max_{y: y \in \{0,1\}^{n-3}}\mathcal{D}_{t-2}(y)$.
$$|\chi^0| \leq \max_{y_i \in \{0,1\}^{n-3}}  \mathcal{D}_{t-2}(y_i) \leq D(n-3,t-2),$$
and 
$\chi^1 \subset 1 \circ \max_{y: y \in \{0,1\}^{n-3}}\mathcal{D}_{t-2}(y)$.
$$|\chi^0| \leq \max_{y_i \in \{0,1\}^{n-3}}  \mathcal{D}_{t-2}(y_i) \leq D(n-3,t-2).$$

By Lemma ~\ref{lemma: D(x-1,y-1)}, we have 
$$2D(n-3,t-2) \leq 2N_2(n-2,t-1) \leq N_{\ell}(n,t),$$
since $\ell \geq 4$.
\end{proof}

Now we can assume none of the $\ell$ vectors start with $00$.

\begin{lemma}
   Suppose that among the vectors $x_1, x_2, \dots, x_{\ell}$, 
$\ell_1 \geq 1$ of them begin with $11$, $\ell_2 \geq 1$ of them begin with $01$, 
and $\ell_4$ of them begin with $10$. Then
$$|\mathcal{D}_t(x_1) \cap \mathcal{D}_t(x_2) \cap ... \cap \mathcal{D}_t(x_{\ell}) |\leq N_{\ell}(n,t).$$
\end{lemma}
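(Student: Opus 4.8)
The plan is to observe that the hypotheses force the configuration to contain a ``good pair'': since $\ell_1\ge 1$ there is a sequence beginning with $11$, and since $\ell_2\ge 1$ there is a sequence beginning with $01$. I will show that this pair alone already yields the bound, so that the $\ell_4$ sequences beginning with $10$ play no role whatsoever. Accordingly, fix one sequence $x_p$ with $x_p^1=x_p^2=1$ and one sequence $x_q$ with $x_q^1=0$ and $x_q^2=1$, write $\chi=\bigcap_{i=1}^{\ell}\mathcal{D}_t(x_i)$, and split $\chi=\chi^0\cup\chi^1$ according to the first symbol of the common subsequence.

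First I would bound $\chi^1$ using only $x_q$. Since $\chi^1\subseteq\mathcal{D}_t(x_q)^1$ and the first occurrence of $1$ in $x_q$ is at position $2$, Lemma~\ref{lemma 1} gives $\mathcal{D}_t(x_q)^1=1\circ\mathcal{D}_{t-1}(x_q^3\cdots x_q^n)$, whence $|\chi^1|\le D(n-2,t-1)$. Next I would bound $\chi^0$ using only $x_p$. If $x_p$ contains no $0$ then $\chi^0=\emptyset$ and there is nothing to check; otherwise the first occurrence of $0$ in $x_p$ is at some position $k\ge 3$, and Lemma~\ref{lemma 1} gives $\chi^0\subseteq 0\circ\mathcal{D}_{t-(k-1)}(x_p^{k+1}\cdots x_p^n)$, so that $|\chi^0|\le D(n-k,\,t-k+1)\le D(n-3,t-2)$, the last inequality being the iterated estimate $D(m,s)\ge D(m-1,s-1)$ already used in Lemma~\ref{lemma:x_1=x_2=a}.

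Combining the two bounds gives $|\chi|=|\chi^0|+|\chi^1|\le D(n-3,t-2)+D(n-2,t-1)$. To finish I would invoke $D(n-3,t-2)\le N_{\ell-1}(n-2,t-1)$, which is exactly the telescoping estimate \eqref{eq:telescope} proved inside Lemma~\ref{lemma:x_1=x_2=a}, now applied with $k=\ell-3$ (legitimate since $\ell\ge 4$ forces $\ell-1\ge 3$); together with the recursion $N_{\ell}(n,t)=D(n-2,t-1)+N_{\ell-1}(n-2,t-1)$ this yields
\[
|\chi|\ \le\ D(n-2,t-1)+N_{\ell-1}(n-2,t-1)\ =\ N_{\ell}(n,t).
\]
The main point here is conceptual rather than computational: one only needs the single pair $(x_p,x_q)$ consisting of an $11$-start and an $01$-start, after which every estimate is routine and borrowed from earlier in the section. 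In particular there is no genuine obstacle, and the counts $\ell_1,\ell_2,\ell_4$ never enter the argument beyond guaranteeing that $x_p$ and $x_q$ exist.
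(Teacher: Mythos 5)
Your proof is correct, but it takes a genuinely different and substantially simpler route than the paper. The paper splits the lemma into three cases according to the multiplicities ($\ell_1\ge 1,\ \ell_2\ge 2$; then $\ell_1\ge 2,\ \ell_2=1$; then $\ell_1=\ell_2=1,\ \ell_4=\ell-2$), and in each case bounds $\chi^0$ and $\chi^1$ using \emph{several} of the available sequences (e.g.\ $|\chi^1|\le 2D(n-4,t-2)$ from two $01$-starts in Case~1, or $|\chi^0|\le N_{\ell_1}(n-3,t-2)$ from the $\ell_1$ sequences starting $11$ in Case~2), closing each case with its own arithmetic comparison against $N_\ell(n,t)$. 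You instead observe that a single $11$-start and a single $01$-start already give the uniform bound $|\chi|\le D(n-3,t-2)+D(n-2,t-1)$, and that this is at most $N_\ell(n,t)$ via $D(n-3,t-2)\le N_{\ell-1}(n-2,t-1)$ together with the recursion $N_\ell(n,t)=D(n-2,t-1)+N_{\ell-1}(n-2,t-1)$; this is essentially the paper's Case~3 argument promoted to cover all cases at once, which shows the paper's Cases~1 and~2 (and indeed the counts $\ell_1,\ell_2,\ell_4$) are not needed for the conclusion. What the paper's case analysis buys is tighter intermediate bounds on $|\chi|$ in those configurations, but none of that extra precision is used; what your argument buys is brevity and the fact that it never uses that the non-$11$, non-$01$ vectors start with $10$, so it even covers configurations the surrounding lemmas already handled. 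Two small polish points: the cleanest citation for $D(n-3,t-2)\le N_{\ell-1}(n-2,t-1)$ is Lemma~\ref{lemma: D(x-1,y-1)} applied with $m=\ell-1\ge 2$ and $(x,y)=(n-2,t-1)$, rather than rerunning the telescope \eqref{eq:telescope} with $k=\ell-3$ (if you do use the telescope, you must still add the step $D(n-2\ell+3,t-\ell+1)\le 2D(n-2\ell+2,t-\ell+1)$, i.e.\ $D(m,s)\le 2D(m-1,s)$, exactly as the paper does after \eqref{eq:k=l-2}); and your treatment of the degenerate all-ones $x_p$ is fine since $n-t\ge \ell-1\ge 3$ guarantees the elements of $\chi$ are nonempty strings.
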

\begin{proof}
Let $\chi= \cap_{i=1}^{\ell} \mathcal{D}_{t}(x_i)$

Case 1: $\ell_1 \geq 1$ and $\ell_2 \geq 2$. 

Assume that the first two coordinates of $x_1$,$x_2$,$\cdots$, $x_{\ell_1}$ are $11$ and the first two coordinates of $x_{\ell_1+1}$,$x_{\ell_1+2}$,$\cdots$, $x_{\ell_1+\ell_2}$ are $01$. Then we have $\chi^0 \subset 0 \circ \max_{y: y \in \{0,1\}^{n-3}}\mathcal{D}_{t-2}(y)$.
$$|\chi^0| \leq \max_{y_i \in \{0,1\}^{n-3}}  \mathcal{D}_{t-2}(y_i) \leq D(n-3,t-2).$$

Similarly, $\chi^1 \subset 1 \circ \max_{y_1,y_2 \in \{0,1\}^{n-2}}\mathcal{D}_{t-1}(y_1) \cap \mathcal{D}_{t-1}(y_2)$.
$$|\chi^1| \leq \max_{y_1,y_2 \in \{0,1\}^{n-2}}  \mathcal{D}_{t-1}(y_1) \cap \mathcal{D}_{t-1}(y_2) \leq 2D(n-4,t-2).$$

We have 
\begin{align}
N_{\ell}(n,t)
&= D(n-2,t-1) + N_{\ell-1}(n-2,t-1). \label{eq:decomp}
\end{align}

Note that we have 
\[
D(n-2,t-1)\ \ge\ 2\,D(n-4,t-2).
\]
By Lemma ~\ref{lemma: D(x-1,y-1)}:
\[
N_{\ell-1}(n-2,t-1)\ \ge\ D\bigl((n-2)-1,(t-1)-1\bigr)\ =\ D(n-3,t-2).
\]
Combining these two bounds in \eqref{eq:decomp} yields
\[
N_{\ell}(n,t)\ \ge\ 2\,D(n-4,t-2)\;+\;D(n-3,t-2),
\]
as claimed.

Case 2: $\ell_1 \geq 2$ and $\ell_2=1$.

We have $\chi^1 \subset 1 \circ \max_{y \in \{0,1\}^{n-2}}\mathcal{D}_{t-1}(y)$.
$$|\chi^1| \leq \max_{y \in \{0,1\}^{n-2}}  \mathcal{D}_{t-1}(y) \leq D(n-2,t-1).$$

$\chi^0 \subset 0 \circ \max_{y_i: y_i \in \{0,1\}^{n-3}}\cap_{i=1}^{\ell_1}\mathcal{D}_{t-2}(y_i)$.
$$|\chi^0| \leq \max_{y_i \in \{0,1\}^{n-3}, y_i \neq y_j} \cap_{i=1}^{\ell_1} \mathcal{D}_{t-2}(y_i) \leq N_{\ell_1}(n-3,t-2).$$

By lemma ~\ref{lemma: D(x-1,y-1)}, we have 
$$D(n-2,t-1) \leq N_\ell(n-1,t).$$
Additionally we have $N_{\ell_1}(n-3,t-2) \leq N_\ell(n-2,t-1)$ by Lemma ~\ref{lemma: N_m,N_L}. 

Finally, $N_\ell(n-2,t-1)+N_\ell(n-1,t)=N_{\ell}(n,t)$

Case 3: $\ell_1= 1$, $\ell_2=1$ and $\ell_4=\ell-2$.

We have $\chi^1 \subset 1 \circ \max_{y \in \{0,1\}^{n-2}}\mathcal{D}_{t-1}(y)$.
$$|\chi^1| \leq \max_{y \in \{0,1\}^{n-2}}  \mathcal{D}_{t-1}(y) \leq D(n-2,t-1).$$

Also we have since $k \geq 3$
\begin{align*}
\chi^0 &  \subset 0 \circ \max_{x_i \in \{0,1\}^{n-2}, y \in \{0,1\}^{n-k}} \cap_{i=1}^{\ell_4} \mathcal{D}_{t-1}(x_i) \cap \mathcal{D}_{t-k+1}(y)\\
& \subset 0 \circ \max_{y \in \{0,1\}^{n-3}} \mathcal{D}_{t-2}(y).
\end{align*}

we have $|\chi^{0}| \leq D(n-3,t-2) \leq  N_{\ell}(n-2,t-1).$
By Lemma ~\ref{lemma: D(x-1,y-1)}, we have $D(n-2,t-1) \leq N_{\ell}(n-1,t)$. 
\end{proof}
Suppose that among the vectors $x_1, x_2, \dots, x_{\ell}$, one begins with $10$ 
and another begins with $11$. In this case, by symmetry, the situation is equivalent 
to that considered in the preceding lemma.

Now the induction step is finished. we have the following theorem. 

\begin{theorem}
For $t \geq 1$, $\ell \geq 3$, and $n \geq t+\ell-1$, we have
$$N(n,\ell,t) \leq N_{\ell}(n,t).$$
\end{theorem}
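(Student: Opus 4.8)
The plan is to prove the bound by strong induction on $n+t+\ell$, the base cases $n=t+\ell-1$, $t=1$, and $\ell=3$ having already been settled. For the inductive step fix $\ell\ge 4$, $t\ge 2$, $n> t+\ell-1$, let $x_1,\dots,x_\ell\in\mathbb F_2^n$ be distinct, and set $\chi=\bigcap_{i=1}^{\ell}\mathcal D_t(x_i)$. Everything hangs on the partition $|\chi|=|\chi^0|+|\chi^1|$ of the common subsequences by their first symbol, combined with Lemma~\ref{lemma 1}: fixing the initial symbol (and, inside each part, reading off the first two symbols of each $x_i$) rewrites each $|\chi^b|$ as the size of an intersection of strictly fewer or strictly shorter deletion balls, of the form $N_{\ell'}(n-2,t-1)$, $D(n-2,t-1)$, $D(n-3,t-2)$, and so on, to which the induction hypothesis, Levenshtein's two-ball bound $2D(n-2,t-1)$, or the trivial one-ball value $D$ applies.

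First I would dispose of the case where all $x_i$ share the same first symbol, which is precisely Lemma~\ref{lemma:x_1=x_2=a}; so I may assume at least one $x_i$ begins with $0$ and at least one begins with $1$. The map $x\mapsto\bar x$ (complement every coordinate) is a bijection of $\mathbb F_2^n$ satisfying $\mathcal D_t(\bar x)=\overline{\mathcal D_t(x)}$, hence it preserves $|\chi|$ while interchanging the length-two prefix classes via $00\leftrightarrow 11$ and $01\leftrightarrow 10$; I would invoke this symmetry to halve the number of prefix patterns that must be examined. The remaining argument is then an exhaustive case analysis according to which of the prefixes $00,01,10,11$ occur among the $x_i$.

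Those cases I would organize as follows. If both $00$ and $11$ occur, apply the lemma treating prefixes $11$ and $00$. If $00$ occurs but $11$ does not, then every $1$-initial sequence begins with $10$, and complementing produces a configuration in which $11$ and $01$ occur while $00$ does not; this, as well as the case where $11$ occurs but $00$ does not, is handled by the lemma treating prefixes $11$, $01$ and $10$. There remains the case where neither $00$ nor $11$ occurs, so every sequence begins with $01$ or $10$ and both appear: if at least two begin with each of $10$ and $01$ this is Lemma~\ref{lemma: l_1 10,l_2 01}, while if one of the two classes is a singleton the bound is immediate from $|\chi^0|\le D(n-2,t-1)$, $|\chi^1|\le N_{\ell-1}(n-2,t-1)$ and the recursion $N_\ell(n,t)=D(n-2,t-1)+N_{\ell-1}(n-2,t-1)$.

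In every case the decisive point is the purely arithmetic verification that the two partial bounds sum to at most $N_\ell(n,t)$; this is where Lemmas~\ref{lemma: D(x-1,y-1)} and~\ref{lemma: N_m,N_L}, the recursion $D(n,t)=D(n-1,t)+D(n-2,t-1)$, and the consequence $D(n,t)\ge 2D(n-2,t-1)$ are used to absorb quantities such as $N_{\ell_1}(n-2,t-1)+N_{\ell_2}(n-2,t-1)$ or $D(n-3,t-2)+N_{\ell_3}(n-3,t-2)$ into $N_\ell(n,t)$. I expect the main difficulty to be organizational rather than computational: one must check that, after the complement symmetry is exploited, the prefix case analysis is genuinely exhaustive, that the degenerate sub-cases with a singleton prefix class are not overlooked, and that in each reduction the new parameters still satisfy $n_0\ge t_0+2$ and $\ell_0\ge 3$ so that the induction hypothesis is legitimately available.
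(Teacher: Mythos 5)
Your proposal is correct and follows essentially the same route as the paper: the same induction on $n+t+\ell$ with the same three families of base cases, the reduction via Lemma~\ref{lemma:x_1=x_2=a} when all first symbols agree, the complementation symmetry, and the same prefix case analysis resting on Lemma~\ref{lemma: l_1 10,l_2 01}, the $00$/$11$ lemma, the $11$/$01$/$10$ lemma, and the arithmetic Lemmas~\ref{lemma: D(x-1,y-1)} and~\ref{lemma: N_m,N_L}. One substantive difference is worth recording: your case analysis is actually more complete than the paper's. Lemma~\ref{lemma: l_1 10,l_2 01} assumes $\ell_1\ge 2$ and $\ell_2\ge 2$, and the paper's closing remark (``one begins with $10$ and another begins with $11$, \ldots by symmetry'') does not cover the residual configuration in which every sequence begins with $01$ or $10$ and one of these two classes is a singleton: complementation sends the prefix pair $\{10,11\}$ to $\{01,00\}$, not to the configuration of the preceding lemma, and under the standing assumptions (no $00$, not all first bits equal) that remark is in fact vacuous. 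You dispose of exactly this leftover case by the immediate estimates $|\chi^0|\le D(n-2,t-1)$ and $|\chi^1|\le N_{\ell-1}(n-2,t-1)$ (the induction hypothesis applies since $\ell-1\ge 3$, $t-1\ge 1$ and $n-2\ge (t-1)+(\ell-1)-1$) together with the recursion $N_\ell(n,t)=D(n-2,t-1)+N_{\ell-1}(n-2,t-1)$, which is precisely the argument the paper should have included; so your write-up patches a small but genuine gap in the published case analysis.
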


\section{Conclusion and Open Problems}
In this paper we derive the maximum size of the intersection of the $t$-deletion balls centered at $x_1,x_2,\cdots,x_{\ell}$ where $x_i \in \{0,1\}^n$ for $i \in \{1,2,\cdots,j\}$ and $x_i \neq x_j$
for $i \neq j$. Therefore, the minimum number of the deletion channel outputs is $N_{\ell}(n,t)+1$ in order to reconstruct a list of size $\ell-1$ of candidate sequences, one of which corresponds to the original transmitted sequence. 

Since our results concern the binary case, the first open question is the following.  

\medskip
\noindent
\textbf{Open Question 1.} 
What is the maximum possible size of the intersection of the $t$-deletion balls 
centered at $x_1, x_2, \dots, x_{\ell}$, where 
$x_i \in \{0,1,\dots,q-1\}^n$ for all $i \in \{1,2,\dots,\ell\}$ and 
$x_i \neq x_j$ for $i \neq j$?

\medskip
Since the only condition above is $x_i \neq x_j$ for $i \neq j$, the following 
naturally arises as a second open question.

\medskip
\noindent
\textbf{Open Question 2.} 
What is the maximum possible size of the intersection of the $t$-deletion balls 
centered at $x_1, x_2, \dots, x_{\ell}$, where 
$x_i \in \{0,1,\dots,q-1\}^n$ for all $i \in \{1,2,\dots,\ell\}$ and 
$d_L(x_i, x_j) \geq d$ for $i \neq j$?

\section*{Acknowledgment}
The author would like to thank Prof.~Alexander Barg for suggesting this problem and providing valuable guidance. Additionally, the author extends thanks to  Yihan Zhang for insightful discussions. 

\bibliographystyle{plain}
\bibliography{main.bib}

\begin{thebibliography}{10}

\bibitem{Abu-Sini}
Maria Abu-Sini and Eitan Yaakobi.
\newblock On levenshtein’s reconstruction problem under insertions,
  deletions, and substitutions.
\newblock {\em IEEE Transactions on Information Theory}, 67(11):7132--7158,
  2021.

\bibitem{Bornholt}
James Bornholt, Randolph Lopez, Douglas~M. Carmean, Luis Ceze, Georg Seelig,
  and Karin Strauss.
\newblock A dna-based archival storage system.
\newblock In {\em Proceedings of the Twenty-First International Conference on
  Architectural Support for Programming Languages and Operating Systems},
  ASPLOS '16, page 637–649, New York, NY, USA, 2016. Association for
  Computing Machinery.

\bibitem{Gabrys}
Ryan Gabrys and Eitan Yaakobi.
\newblock Sequence reconstruction over the deletion channel.
\newblock {\em IEEE Transactions on Information Theory}, 64(4):2924--2931,
  2018.

\bibitem{Horovitz}
Michal Horovitz and Eitan Yaakobi.
\newblock Reconstruction of sequences over non-identical channels.
\newblock {\em IEEE Transactions on Information Theory}, 65(2):1267--1286,
  2019.

\bibitem{LEVENSHTEIN2001310}
Vladimir I.~Levenshtein.
\newblock Efficient reconstruction of sequences from their subsequences or
  supersequences.
\newblock {\em Journal of Combinatorial Theory, Series A}, 93(2):310--332,
  2001.

\bibitem{Levenshtein}
Vladimir I.Levenshtein.
\newblock Efficient reconstruction of sequences.
\newblock {\em IEEE Transactions on Information Theory}, 47(1):2--22, 2001.

\bibitem{Junnila}
Ville Junnila, Tero Laihonen, and Tuomo Lehtilä.
\newblock On levenshtein’s channel and list size in information retrieval.
\newblock {\em IEEE Transactions on Information Theory}, 67(6):3322--3341,
  2021.

\bibitem{Junnila2}
Ville Junnila, Tero Laihonen, and Tuomo Lehtilä.
\newblock The levenshtein’s sequence reconstruction problem and the length of
  the list.
\newblock {\em IEEE Transactions on Information Theory}, 70(2):1050--1066,
  2024.

\bibitem{JUNNILA2025115279}
Ville Junnila, Tero Laihonen, and Tuomo Lehtilä.
\newblock Levenshtein's sequence reconstruction problem and results for larger
  alphabet sizes.
\newblock {\em Theoretical Computer Science}, 1045:115279, 2025.

\bibitem{Liron}
Yuvalal Liron and Michael Langberg.
\newblock A characterization of the number of subsequences obtained via the
  deletion channel.
\newblock {\em IEEE Transactions on Information Theory}, 61(5):2300--2312,
  2015.

\bibitem{PHAM2025105980}
Van Long~Phuoc Pham, Keshav Goyal, and Han~Mao Kiah.
\newblock Sequence reconstruction problem for deletion channels: A complete
  asymptotic solution.
\newblock {\em Journal of Combinatorial Theory, Series A}, 211:105980, 2025.

\bibitem{Sala}
Frederic Sala, Ryan Gabrys, Clayton Schoeny, and Lara Dolecek.
\newblock Exact reconstruction from insertions in synchronization codes.
\newblock {\em IEEE Transactions on Information Theory}, 63(4):2428--2445,
  2017.

\bibitem{Hirschberg}
Daniel S.Hirschberg and Mireille Regnier.
\newblock Tight bounds on the number of string subsequences.
\newblock {\em Journal of Discrete Algorithms}, 1(1):123--132, 2000.

\bibitem{Yaakobi}
Eitan Yaakobi and Jehoshua Bruck.
\newblock On the uncertainty of information retrieval in associative memories.
\newblock {\em IEEE Transactions on Information Theory}, 65(4):2155--2165,
  2019.

\bibitem{Yaakoni2}
Eitan Yaakobi, Jehoshua Bruck, and Paul~H. Siegel.
\newblock Constructions and decoding of cyclic codes over $b$ -symbol read
  channels.
\newblock {\em IEEE Transactions on Information Theory}, 62(4):1541--1551,
  2016.

\bibitem{Yazdi}
S.~M. Hossein~Tabatabaei Yazdi, Han~Mao Kiah, Eva Garcia-Ruiz, Jian Ma, Huimin
  Zhao, and Olgica Milenkovic.
\newblock Dna-based storage: Trends and methods.
\newblock {\em IEEE Transactions on Molecular, Biological, and Multi-Scale
  Communications}, 1(3):230--248, 2015.

\end{thebibliography}

\end{document}